\newtheorem{thm}{Theorem}[section]
\newtheorem{lem}[thm]{Lemma}
\newcommand{\del}{\partial}
\def\nn{\nonumber} 
\numberwithin{equation}{section}
\def\a{\alpha}  \def\b{\beta}
 \def\g{\gamma} 
 \def\d{\delta} 
    \def\k{\kappa}
   \def\m{\mu}
\def\n{\nu}    \def\r{\rho}
\def\s{\sigma}
  \def\cC{{\cal C}} 
 \def\cH{{\cal H}}  
\def\cM{{\cal M}} \def\cN{{\cal N}}  
  \def\cR{{\cal R}}
\def\R{{\mathbb R}} \def\C{{\mathbb C}} \def\N{{\mathbb N}}
 \def\one{\mbox{1 \kern-.59em {\rm l}}}
\def\mso{\mathfrak{so}}
\newcommand{\Tr}{\mathrm{Tr}}
\newcommand{\End}{\mathrm{End}}
\def\hs{\mathfrak{hs}}
\def\Tr{\mbox{Tr}}
\newcommand{\eq}[1]{(\ref{#1})}
\sloppy \allowdisplaybreaks[3]
\begin{document}



\renewcommand{\title}[1]{\vspace{10mm}\noindent{\Large{\bf#1}}\vspace{8mm}} 
\newcommand{\authors}[1]{\noindent{\large #1}\vspace{5mm}}
\newcommand{\address}[1]{{\itshape #1\vspace{2mm}}}


\begin{titlepage}
\begin{flushright}
 UWThPh-2021-22 \\
 UTHEP-768
\end{flushright}
\begin{center}
\title{ {\Large Spherically symmetric solutions of  
 higher-spin gravity\\[1ex]   in the IKKT matrix model}  }

\vskip 3mm

\authors{Yuhma Asano${}^\dagger$ and Harold C.\ Steinacker${}^\ddagger$}

\vskip 3mm

\address{ 
${}^\dagger${\it Faculty of Pure and Applied Sciences, University of Tsukuba,\\
1-1-1 Tennodai, Tsukuba, Ibaraki 305-8571, Japan}\\
Email: {\tt asano@het.ph.tsukuba.ac.jp}\\
\vskip 3mm
${}^\ddagger${\it Faculty of Physics, University of Vienna\\
Boltzmanngasse 5, A-1090 Vienna, Austria  }  \\
Email: {\tt harold.steinacker@univie.ac.at}  
  }

\bigskip

\vskip 1.4cm

\textbf{Abstract}
\vskip 3mm

\begin{minipage}{14.5cm}%
\vskip 3mm

We present a systematic study of spherically symmetric vacuum solutions of the 
IKKT matrix model, within the framework of semi-classical covariant quantum geometries.
All asymptotically flat solutions of the equations of motion of the frame  are found explicitly.
They reproduce the linearized Schwarzschild geometry for large $r$ but deviate from it at the 
non-linear level, and include contributions from dilaton and axion. 
They are pertinent to the pre-gravity theory arising on classical brane solutions within the classical matrix model, 
before taking into account the Einstein-Hilbert term induced by quantum effects.
We also address the problem of reconstructing matrix configurations corresponding to 
some given frame, and show that this problem can always be solved at the geometrical
level of the underlying higher spin theory, ignoring possible higher spin modes.

\end{minipage}

\end{center}

\end{titlepage}

\tableofcontents
%
%
\section{Introduction}

Matrix models have been introduced some 25 years ago 
as candidates for a non-perturbative formulation of 
superstring theory \cite{Ishibashi:1996xs,Banks:1996vh}. They provide an independent and  non-perturbative starting point, which allows to access the 
rich structures of string theory from a different angle. 
In particular, they yield solutions and configurations
which are not easily seen from the more traditional point of view.
Our approach is to take the IKKT or IIB matrix model as a starting point, 
and investigate the physics emerging 
on interesting background solutions.

A particularly interesting type of 3+1-dimensional covariant quantum space-time
solution of the IKKT model\footnote{See also e.g.~\cite{Castelino:1997rv} 
for a somewhat related early construction in the BFSS model.} was recently found 
\cite{Sperling:2019xar,Steinacker:2017vqw,Steinacker:2017bhb}, which is manifestly invariant under local rotations and translations. 
This type of solution is dynamical, and leads to a well-defined 
higher-spin gauge theory for the fluctuations on the background.
The geometry is governed by a dynamical frame, which is generated 
by the dynamical matrices of the model in the semi-classical regime.
A covariant description of the geometrical sector 
of this theory was obtained in \cite{Steinacker:2020xph} 
in terms of a Weitzenb\"ock connection and its torsion.
This was cast into a more conventional equation in terms of the 
dynamical frame and the Levi-Civita connection in \cite{Fredenhagen:2021bnw}. 
This allows to look for solutions of these non-linear equations 
describing non-trivial geometries. 
A simple spherically symmetric static solution centered at some point 
was  found, which is asymptotically flat and reduces to the 
linearized Schwarzschild solution for large $r$.

In the present paper, we present a systematic study of spherically symmetric vacuum solutions of the non-linear equations of motion for the frame. 
This is a rather non-trivial problem 
due to the presence of dilaton and (gravitational) axion, 
which lead to a coupled system of non-linear equations.
Its solution is not unique, in contrast to general relativity (GR),
 where the Schwarzschild solution is unique by Birkhoff's theorem. 
Despite the complicated structure,
we obtain explicitly 
the most general vacuum solution, 
including non-trivial contributions from dilaton and (gravitational) axion.
The generic solution is given in terms of a hypergeometric function
$_2F_1$ involving a number of free parameters. The solutions with asymptotically flat
geometry involve three independent parameters, which can be identified as
mass, and two scales characterizing the axion and dilaton.
They have an intricate global structure with various  
types of behavior depending on the parameters.

Even though all asymptotically flat solutions reduce to the 
linearized Schwarzschild geometry for large radius,
none of the solutions appear to reproduce the characteristic features of a 
Schwarzschild-like horizon.
The deviation from the 
Schwarzschild geometry is already seen in the 
 Eddington-Robertson-Schiff parameters, which are found to be 
 $\g = 1$ but $\b = 2$ in all asymptotically flat solutions. 
This should not be too surprising, 
since we consider the semi-classical regime of the matrix model without  
taking into account quantum effects.
Interestingly, we find some solutions which are 
reminiscent of wormhole geometries, connecting two asymptotically flat 
geometries.

These non-standard characteristics of the solutions are interpreted as  
features of the classical {\em pre-gravity theory} described by the classical 
matrix model, which is expected to dominate the extreme IR 
(cosmological) regime. In the presence of fuzzy extra dimensions, an Einstein-Hilbert (E-H) term arises
in the quantum effective action as shown in \cite{Steinacker:2021yxt},
 which is expected to dominate on shorter scales.
Therefore the present solutions should be interpreted with caution.
They certainly provide a deeper understanding of the classical 
aspects of the matrix model and its deviations from GR; however a 
proper physical assessment of the solutions can presumably 
only be given once this induced Einstein-Hilbert term is taken into account.

The solutions given in this paper are solutions of the equations of motion for the 
frame, and we also elaborate the corresponding metric in standard form.
To obtain solutions of the (semi-classical) matrix model, it 
remains to be shown that these frames can be implemented 
in terms of Hamiltonian vector fields generated by the basic matrices.
We also address this ``reconstruction'' problem and show that this can always be 
achieved  at the lowest ``geometrical'' level of the 
underlying higher spin gauge theory,
based on general results in \cite{Sperling:2019xar,Sperling:2018xrm}. 
The extension to the full 
higher spin sector remains as an open problem, but we expect that this can be solved,
as illustrated in the simplest solution found in \cite{Fredenhagen:2021bnw}.

This paper is organized as follows. After a brief review of the underlying 
matrix model and its geometrical interpretation in section \ref{sec:review}, 
we discuss the general setup of spherically symmetric geometries in the 
present framework, and obtain
the most general spherically symmetric solutions for the frame 
in section \ref{sec:solution-general}. The effective metric resulting from these 
solutions is then discussed in section \ref{sec:metric}.
In appendix A, we provide a solution of the reconstruction problem 
at the geometrical level of the full higher spin theory.
Finally, appendix B provides a compact derivation of the covariant equations of motion
for the frame and its torsion.

\section{Matrix model and cosmological quantum spacetime}
\label{sec:review}

We consider the IKKT or IIB matrix model \cite{Ishibashi:1996xs} with a mass term,
\begin{equation}
S[Z,\Psi] = {\rm Tr}\big( [Z^{\dot\a},Z^{\dot\b}][Z_{{\dot\a}},Z_{{\dot\b}}] + 2 m^2 Z_{\dot\a} Z^{\dot\a}
\,\, + \overline\Psi \Gamma_{\dot\a}[Z^{\dot\a},\Psi] \big) \ ,
\label{MM-action}
\end{equation} 
where the indices are contracted with the flat metric $\eta_{\dot\a\dot\b}$.
Besides invariance under gauge transformations $Z^{\dot\a}\to U^{-1} Z^{\dot\a} U$,
the model is invariant under $SO(9,1)$ acting on the dotted indices, and $\cN=2$ supersymmetry when $m^2=0$. 
We will study classical solutions of this matrix model,
which can be interpreted as spherically symmetric space-time geometries 
around some center, which for large distances reduce to the cosmological 
space-time solution found in \cite{Sperling:2019xar}.
That solution is given by
\begin{align}
 Z^{\dot \a}  = \frac 1R \cM^{\dot \a 4} =: T^{\dot \a}, \qquad R^{-2} = \frac{1}{3} m^2 \ 
 \label{Y-T-solution}
\end{align}
for ${\dot \a}=0,...,3$,
and  requires the presence of the mass term in \eq{MM-action}. 
Here $\cM^{{\dot \a}{\dot \b}}$ are $\mso(4,2)$ generators in the doubleton 
representation $\cH_n$ for $n\in\N$. 
All matrices with indices from $4$ to $9$ as well as the fermions will be set to zero;
nevertheless, they play a crucial role in the quantum theory.
The mass $m^2$ sets the cosmological curvature scale, 
so that it effectively vanishes from a local point of view. 
Ultimately, it is expected that this solution (or a very similar one)
is stabilized by quantum effects without mass term.
However we restrict ourselves  to the classical model
in this paper, and therefore include the explicit mass term.

%
%
\subsection{Semi-classical structure of the background}
\label{sec:semiclass}

We consider a class of solutions of the 
matrix model, which admit a semi-classical
interpretation in terms of a 6-dimensional
Poisson (more precisely: symplectic)
manifold $\cM^6$, which can be viewed as a twisted $S^2$ bundle over space-time $\cM^{3,1}$:
\begin{align}
 \cM^6 \stackrel{\rm loc}{\cong} \cM^{3,1} \times S^2
 \ .
\end{align}
``Twisted'' bundle\footnote{More precisely: equivariant bundle.}, means that the 
local space-like stabilizer group $SO(3)$ of any point on $\cM^{3,1}$
acts on the internal $S^2$.
Such a background in the matrix model  will be denoted as ``covariant quantum space''.
A specific example of such a solution where $\cM^{3,1}$ is a cosmological
FLRW space-time was given in \cite{Steinacker:2017bhb} and discussed in detail in \cite{Sperling:2019xar}.
The most important feature of such a background in the matrix model is 
that the local bundle structure leads to a higher-spin gauge theory,
where  the  gauge symmetry of the matrix model translates into 
(higher-spin generalization of) volume-preserving diffeomorphism.

It was shown in \cite{Steinacker:2020xph} that the semi-classical equations of motion of the 
matrix model can be translated\footnote{That formulation is justified in the 
asymptotic regime, where the scale of perturbations is much shorter than the cosmic scale.} into a covariant description  of 
a frame and its associated Weitzenb\"ock connection. That description, in turn,
was re-cast in \cite{Fredenhagen:2021bnw} in terms of covariant equations of motion for the frame,
in terms of the standard Levi-Civita connection.

To be specific, we will mostly focus on local perturbations of the cosmological 
solution in \cite{Sperling:2019xar}, keeping the asymptotics fixed. However, 
most of the considerations are more general, and some of the 
solutions will correspond to more general asymptotic backgrounds. 
In particular, we find hints for solutions with global structure 
reminiscent of wormholes, and some special cases are
expected to give rise to other  cosmological asymptotics.
Those aspects should be studied in detail elsewhere.

\paragraph{Cosmological FLRW background.}

Let us describe the mathematical structure of the solution in 
\cite{Sperling:2019xar} in some detail.
The only mathematical structures which exist in the matrix model are the matrices 
$Z^{\dot\a}$ and their commutators, 
which reduce to functions and Poisson brackets in the semi-classical limit $n \to \infty$.
We must learn how to work with these efficiently and to cast the system into a recognizable form.

The above background provides natural generators $x^\mu \sim X^\mu$ and $t^\mu \sim T^\mu$ 
which can be interpreted as functions on $\C P^{1,2}$,  with 
a Poisson or symplectic structure  $\{.,.\}$ encoded in $\theta^{\mu\nu} = \{x^\mu,x^\nu\}$.
The doubleton representations $\cH_n$ entail constraints, which
in the semi-classical limit imply the following relations 
\begin{subequations}
\label{geometry-H-M}
\begin{align}
 x_\mu x^\mu &= -R^2 - x_4^2 = -R^2 \cosh^2(\eta) \, , 
 \qquad R \sim \frac{\tilde R}{2}n   \label{radial-constraint}\\
 t_{\mu} t^{\mu}  &=  \tilde R^{-2}\, \cosh^2(\eta) \,, \\
 t_\mu x^\mu &= 0, \label{xt-constraint} \\
 t_\mu \theta^{\mu\a} &= - \sinh(\eta) x^\a , \\
 x_\mu \theta^{\mu\a} &= - \tilde R^2 R^2 \sinh(\eta) t^\a , \label{x-theta-contract}\\
 \eta_{\mu\nu}\theta^{\mu\a} \theta^{\nu\b} &= R^2 \tilde R^2 \eta^{\a\b} - R^2 \tilde R^4 
t^\a t^\b + \tilde R^2 x^\a x^\b \ ,
%
\end{align}
\end{subequations}
where $\mu,\a = 0,\ldots ,3$, and a self-duality relation for $\theta^{\mu\nu}$ \cite{Sperling:2019xar}.
The $x^\mu$ will be interpreted as 
Cartesian  
coordinate functions $x^\mu:\, \cM^{3,1}\hookrightarrow \R^{3,1}$, and $\eta$ is a global time coordinate via
\begin{align}
 R \sinh(\eta) = \pm \sqrt{-x_\mu x^\mu-R^2} \ ,
 \label{x4-eta-def}
\end{align}
which  will be related to the scale parameter  of the universe. 
Similarly, the $t^\mu$
are extra generators which describe the internal $S^2$ fiber over every point on $\cM \equiv \cM^{3,1}$.
Together, $x^\mu$ and $t^\mu$ generate the algebra $\cC \cong \cC^\infty(\C P^{1,2})$. 
Along with a selfduality relation,
these constraints allow  to express the Poisson tensor $\theta^{\mu\nu}$  as follows
\begin{align}
 \theta^{\mu\nu} &= \frac{\tilde R^2}{\cosh^2(\eta)} 
   \Big(\sinh(\eta) (x^\mu t^\nu - x^\nu t^\mu) +  \epsilon^{\mu\nu\a\b} x_\a t_\b \Big) \ .
   \label{theta-general}
\end{align}
Now consider the Poisson brackets:
 \begin{align}
  \{x^\mu,x^\nu\} &=  \theta^{\mu\nu} = - \tilde R^2 R^2\{t^\mu, t^\nu\}
  \ ,
  \label{X-X-CR}\\
  \{t^\mu,x^\nu\} &= \eta^{\mu\nu} \sinh(\eta) \, .  \label{T-X-CR}   
\end{align}
This implies that
\begin{align}
\{t^\mu,.\}  \ = \sinh(\eta) \del_\mu,
\end{align}
where $\del_\mu:=\frac{\del}{\del x^\mu}$
act as momentum generators on $\cM^{3,1}$,
leading to  the useful relation 
\begin{align}
 \del_\mu \phi = \b\{t_\mu,\phi\}, \qquad \b = \frac{1}{\sinh(\eta)}
 \ ,
 \label{del-t-rel}
\end{align}
for an arbitrary function $\phi = \phi(x)$.
In the late time regime, the internal sphere and the Poisson tensor 
are characterized by \cite{Steinacker:2020xph}
\begin{align}
 |t|   &\approx \tilde R^{-1} \cosh(\eta) \ , \nn\\
 \theta^{0i} &\stackrel{\xi}{\approx} \tilde R^2 R\, t^i  \ \gg \
  \theta^{ij} \stackrel{\xi}{\approx} \frac {\tilde R^2 R}{\sinh(\eta)}  \epsilon^{ijk} t^k  \ \stackrel{\eta\to\infty}{\sim} const \ 
\end{align}
near the reference point $\xi=(x^0,0,0,0)$.

\subsection{Frame and geometry on $\cM^{3,1}$}

The matrix model provides 3+1 generators $Z_{\dot\a}$, which define a frame on 
$\cM^{3,1}$ via 
\begin{align}
  \{Z_{\dot\a},y^\mu\} = \tensor{E}{_{\dot\a}^\mu}
  .
  \label{frame-Poisson}
\end{align}
Here we use general coordinates $y^\mu$ that contain $x^\mu$.
This defines a metric 
\begin{align}
 \g^{\mu\nu} = \eta_{\dot\a\dot\b}\tensor{E}{_{\dot\a}^\mu}\tensor{E}{_{\dot\b}^\nu}
 .
\end{align}
Note that coordinate
indices such as $\mu$ are raised and lowered by $\gamma_{\mu\nu}$,
and frame indices such as $\dot\alpha$  by $\eta_{\dot\alpha\dot\beta}$.
The effective metric $G^{\mu\nu}$ is then given  by a conformal rescaling 
such that \cite{Sperling:2019xar}
\begin{align}
 \sqrt{|G|}\r^{-2} =  \r_M  
 ,
 \label{rho-M-G-relation}
\end{align}
where $|G|$ is the absolute value of the determinant of $G_{\mu\nu}$ and
$\r_M$ is the symplectic volume form (reduced to $\cM^{3,1}$), 
which in Cartesian coordinates is $\r_M =|\theta^{\mu\nu}|^{-\frac{1}{2}}=1/\sinh(\eta)$.
Explicitly,
\begin{align}
 G^{\mu\nu} = \r^{-2} \g^{\mu\nu}
 .
\end{align}
By definition, one can always make $\rho$ positive without loss of generality.
For the cosmic background solution \eq{Y-T-solution}, the frame takes the form
\begin{align}
 \tensor{\bar E}{_{\dot\a}^\mu} =  \sinh(\eta)\d_{\dot\a}^\mu \ ,
 \label{background-frame}
\end{align}
and the dilaton is given by 
\begin{align}
 \bar\rho^2 = \sinh^3(\eta) \ .
 \label{dilaton-BG}
\end{align}
More generally,
the frame resulting from the above construction always satisfies the 
divergence constraint 
\begin{align}
 \del_\nu \big(\r_M\tensor{E}{_{\dot\a}^{\nu}}\big) =
  \del_\nu \big(\sqrt{|G|} \r^{-2}\tensor{E}{_{\dot\a}^{\nu}}\big) 
  = 0 =  \nabla^{(G)}_\nu(\r^{-2}\tensor{E}{_{\dot\a}^{\nu}})
 \ .
 \label{frame-div-free}
\end{align} 
This can be seen as a consequence  of the Jacobi identity 
\cite{Steinacker:2019fcb,Fredenhagen:2021bnw},
and it means that the $\r^{-2}\tensor{E}{_{\dot \a}^\mu}$ are 
volume-preserving vector fields.

\paragraph{Torsion tensor.}
 
 As usual,
we can associate to the  frame $\tensor{E}{_{\dot \a}^\mu}$ the co-frame
\begin{align}
 \tensor{E}{_{\dot \a}_\mu} = \g_{\mu\nu}\tensor{E}{_{\dot \a}^\nu}
 \ ,
\end{align}
which can be viewed as a one-form
\begin{align}
 E_{\dot \a} := \tensor{E}{_{\dot \a}_\mu} dx^\mu
 \ .
\end{align}
Since the frame satisfies a divergence constraint \eq{frame-div-free}, there is 
no local Lorentz invariance acting on the frame index. 
This means that the frame has more physical content than in GR.
In particular, one can define the associated tensor or two-form
\begin{align}
 T^{\dot \a} := d E^{\dot \a} = \frac 12 \tensor{T}{_\mu_\nu^{\dot \a}} dx^\mu dx^\nu\ .
\end{align}
This can be understood as torsion of the Weitzenb\"ock connection 
associated to the frame. 
Its totally antisymmetric component defines a rank one tensor via 
the Hodge star
\begin{align}
  \tilde T_{\r} =  \frac 12\sqrt{|G|}^{-1}\r^2 G_{\r\k}\varepsilon^{\n\s\mu\k}\tensor{T}{_\nu_\s_\mu} \ 
  = \frac 12\sqrt{|\g|}^{-1}\g_{\r\k}\varepsilon^{\n\s\mu\k}\tensor{T}{_\nu_\s_\mu} \ ,
 \label{T-AS-dual-2}
\end{align}
where 
$\tensor{T}{_\nu_\s_\mu} = \tensor{T}{_\nu_\s^{\dot \a}}\tensor{E}{_{\dot \a}_\mu}$.
We shall denote $\tilde T_\mu$ as an axion one-form, for reasons explained in \cite{Fredenhagen:2021bnw}.
Moreover, the contraction of the torsion tensor is related to the dilaton
through the following identity 
\begin{align}
 \frac{2}{\rho}\del_\s\r \ &=  \tensor{T}{_\mu_{\s}^\mu}  \ .
 \label{T-trace-rho}
\end{align}

\paragraph{Equations of motion.}

It was shown in \cite{Fredenhagen:2021bnw}   that 
the semi-classical equations of motion of the matrix model lead to the following
equation of motion for the frame
\begin{align}
 \nabla^{(G)}_\nu(G^{\n\r}\rho^{2}\tensor{T}{_{\r\m}^{\dot\a}})
 = \frac 12 \rho^2 \sqrt{|G|}^{-1}
 \varepsilon^{\n\r\k\s}G_{\m\k} \tilde T_\s \,
 \tensor{T}{_\n_{\r}^{\dot\a}}
 + m^2 \tensor{E}{^{\dot\a}_\m} \ .
 \label{torsion-frame-eom-3}
\end{align}
See appendix \ref{sec:derive-EoM} for details.
These are analogous to Maxwell equations for the 4 vector fields $\tensor{E}{^{\dot\a}_\mu}$.
Moreover, one can show using the equations of motion that
the axion vector field is the derivative of a scalar field
$\tilde\rho$  identified as gravitational axion,
\begin{align}
 \tilde T_\mu = \r^{-2}\del_\mu\tilde\rho \ ,
 \label{axion-def}
\end{align}
which satisfies 
\begin{align}
 \sqrt{|G|} \nabla^{(G)}_\nu (G^{\mu\nu}\r^{-4}\del_\mu\tilde\r)
 = \frac 14  \varepsilon^{\n\r\mu\k}
 \tensor{T}{_\nu_\r^{\dot\a}} \tensor{T}{_\k_\mu_{\dot\a}}  \ .
 \label{div-T-dEdE}
\end{align}
This can be written in terms of differential forms as follows
\begin{align}
 d * (\rho^{-4}d\tilde{\rho}) =  T^{\dot{\alpha}}\wedge T_{\dot{\alpha}}
 \ ,
 \label{div-T-dEdE-form}
\end{align}
where $*$ is the Hodge star associated with the effective metric $G$.
Finally, the dilaton satisfies the following equation of motion 
as part of \eqref{torsion-frame-eom-3}
\begin{align}
 - \nabla^{(G)}_{\nu}\big(G^{\mu\nu}\r^{-1}\del_\mu\r) =
 2 \r^{-2} m^2 + \frac 14\rho^{2}\tensor{T}{_{\mu\r}_\kappa}\tensor{T}{_\nu_\s^\kappa} G^{\mu\nu}G^{\r\s}
 + \frac 1{2} \r^{-4} G^{\mu\nu} \del_\mu\tilde\r \del_\nu\tilde\r \ . 
 \label{Box-rho-onshell}
\end{align}

\section{General rotationally invariant frame}

We are interested in spherically symmetric static geometries centered 
at some point in space, which can be viewed as a local perturbation 
of the cosmic background solution \eq{Y-T-solution}. We  assume that 
the scale of this local 
structure is much smaller than the cosmic background curvature, so that 
 the background frame \eq{background-frame} can be approximated in Cartesian coordinates by 
\begin{align}
 \tensor{E}{^{\dot\a}_\mu} \sim \frac{1}{\sinh(\eta)}\, \d^{\dot\a}_\mu
 \label{background-frame-loc}
\end{align}
for some fixed $\eta$, neglecting the cosmic time evolution. 
The $SO(3)$ symmetry around the local center 
acts on the cosmic background frame by treating $\dot\a$ as a vector index. 
Spherical symmetry will be imposed by keeping this $SO(3)$ symmetry manifest 
also for the perturbed frame. 
This is achieved in Cartesian coordinates centered at $r=0$ via
the  ansatz  
\begin{align}
\tensor{E}{^0_0} &= A(r),\nonumber\\
\tensor{E}{^i_0} &= E(r)\, x^{i},\nonumber\\
\tensor{E}{^0_i} &= D(r)\, x^{i},\nonumber\\
\tensor{E}{^i_j} &= F(r)\, x^{i}x^{j} + \delta^{i}_{\ j}B(r) + S(r)\epsilon_{ijm} x^{m} \ ,
\label{ansatz}
\end{align}
where $A,B,D,E,F$ and $S$ are functions of 
$r^2 = \sum x_i^2$ only.
Such a solution with $S=0$ and $E=0$ was found in \cite{Fredenhagen:2021bnw}, given by
\begin{align}
 \frac{1}{A} &= 1 + \frac {2M}{r} , \qquad
 B_0 = b_0, \qquad  E = 0 = S , \qquad
  A \r^2 = const \ .
 \label{special-solution-HS}
\end{align}
In this paper, we shall find 
the most general spherically symmetric static solution.
$D$ and $F$ can be eliminated  using a simple change of coordinates 
$t\to t +f(r)$ and $x^i \to g(r) x^i$, which is 
understood from now on. 
In terms of differential forms $\tensor{E}{^{\dot\a}}= \tensor{E}{^{\dot\a}_\mu} dx^\mu$,
the frame is then
\begin{align}
\tensor{E}{^0} &= A dt,\qquad 
\tensor{E}{^i} = B dx^i + E x^i dt + S\epsilon_{ijm} x^{m} dx^j
 \ ,
\end{align}
and the associated torsion two-form $T^{\dot\a} = dE^{\dot\a}$ is obtained as
\begin{align}
T^0 &= d E^0 =  A' dr dt, \nn\\
T^i &= d E^i = B' dr dx^i + E d x^i dt + x^i E' dr  dt 
+ S\epsilon_{ijm} d x^{m} dx^j + S' \epsilon_{ijm} x^{m} dr dx^j \ ,
\label{ansatz-E-simple}
\end{align}
where the prime denotes the derivative of functions with respect to $r$.
Then the effective metric  takes the form 
\begin{align}
 G_{\mu\nu}=\r^2
 \begin{pmatrix}
  -(A^2-r^2 E^2) & r B E & 0 & 0 \\
  r B E &  B^2 & 0 & 0 \\
  0 & 0 & r^2 (B^2+r^2 S^2) & 0 \\
  0 & 0 & 0 & r^2 (B^2+r^2 S^2)\sin ^2\vartheta \\
 \end{pmatrix}
 \ ,
\label{eff-etric-general}
\end{align}
in the standard polar coordinates $(t,r,\vartheta,\varphi)$
with 
\begin{align}
 \sqrt{|\g|}
 = |A B| (B^2+r^2 S^2)r^2\sin\vartheta \ .
\end{align}
Throughout this paper, we assume that $A$ and $B$ are positive for $r\to\infty$, 
since we are interested in 
perturbation of \eqref{background-frame-loc}.

\subsection{The general divergence constraint}

We can solve the dilaton constraint \eq{T-trace-rho} in the form
\begin{align*} 
 -\frac{2}{\rho}\partial_{\mu}\rho = \tensor{T}{_\mu_\nu^\nu}
 \ ,
\end{align*}
for the most general ansatz as follows.
Due to the spherical symmetry it suffices to consider the
time and radial components, which reduce
to the following two relations
\begin{align}
0 &= \frac{E}{B}\Big(\Big( \ln \Big|\frac{A}{r E}\Big| \Big)' -\frac{2 B^2}{r \left(B^2+r^2 S^2\right)}\Big) \ , \nn\\
0 &= \left[\ln \big| A \rho ^2 \left(B^2+r^2 S^2\right)\big|\right]'
 +\frac{2 r S^2}{B^2+r^2 S^2} \ .
\label{dilaton-constr-general}
\end{align}
The first equation has two branches: one with $E=0$ and one with $E\neq 0$.


\paragraph{Branch $E = 0$.}
In this case,
the divergence constraint \eq{dilaton-constr-general} reduces to
only one differential equation.
This can be rewritten as
\begin{align}
 (\ln |A|)'
 =\frac{2B^2}{r(B^2 + r^2 S^2)}
 - \left[ \ln\big(r^2 \rho^2 \left(B^2 + r^2 S^2\right)\big) \right]'
 \ .
 \label{dilaton-constr-E=0}
\end{align} 


\paragraph{Branch $E \neq 0$.}

In this case, the difference of the above relations \eqref{dilaton-constr-general} can be integrated as 
\begin{align}
 E = \frac{c_e}{r^3 \rho^2 \left(B^2+r^2 S^2\right)}
 \ ,
 \label{E-B-S-relation}
\end{align}
for some integration constant $c_e$.
Inserting this into the first equation gives 
\begin{align}
 \left(\ln \Big|\frac{A}{r E}\Big| \right)'
 = \frac 2{c_e} r^2 \rho^2 B^2 E \ .
 \label{dilaton-relation-1}
\end{align}
The two equations can  be written in the  equivalent form
\begin{align}
 B^2 &= \frac{c_e }{2r^2 \rho^2}
 \frac{\left(\ln \left|\frac{A}{r E}\right|\right)'}{E}
 \ ,
 \nn\\
 S^2 &= -\frac{c_e}{2r^4 \rho^2}
 \frac{\left(\ln \left|\frac{A}{r^3 E}\right|\right)'}{E}
 \ .
 \label{B-S-AE}
\end{align}
Hence $B(r)$ and $S(r)$ are determined by the 
two arbitrary functions $A(r)$ and $E(r)$. These should be 
determined by the equations of motion.

\subsection{The axion}

The axion one-form $\tilde T = \tilde T_\mu dx^\mu$ \eq{T-AS-dual-2}
can be obtained from 
\begin{equation}
 \frac{1}{2} T_{\nu\rho \mu} dx^{\nu}\wedge dx^{\rho}\wedge dx^{\mu}
 = T^{\dot{\alpha}}\wedge E_{\dot{\alpha}} \ = \r^{-2} *\tilde T 
 \ .
\label{axion-star}
\end{equation} 
By using \eqref{ansatz-E-simple}, 
one can rewrite $T^{\dot\alpha}\wedge T_{\dot\alpha}$ explicitly as
\begin{align}
 T^0 \wedge E_0 &= 0 \ , \nn\\
 T^i \wedge E_i 
 &= (B' dr dx^i +  E d x^i dt ) \wedge S\epsilon_{ijm} x^{m} dx^j
  + S\epsilon_{ijm} d x^{m} dx^j(B dx^i + E x^i dt)
 +  S' B \epsilon_{ijm} x^{m} dr dx^j dx^i \nn\\
 %
 %
 &= 2(r B' S - r S' B - 3SB) d^3 x
  - 2 E S   \epsilon_{mij} x^{m} d x^idx^j dt
 \ ,
\end{align}
noting that 
\begin{align}
 x^i \epsilon_{ijm} rdr\wedge  d x^{j} \wedge dx^m 
  = 2 r^2  d^3 x \ .
  \label{epsdxdx-id}
\end{align}
Through the following relation
\begin{align}
r B' S - r S' B - 3SB
  = r B S \left(\ln\Big|\frac{B}{r^3 S}\Big|\right)' \ ,
\end{align}
the $t$ and $r$ components of \eq{axion-star} reduce to
\begin{align}
 \g^{t\nu}\tilde T_\n
 &= \frac{2 r BS(\ln\left|\frac{B}{r^3 S}\right|)'}
 {|A B|(B^2 + r^2 S^2)} \ , \nn\\
 \g^{r\nu}\tilde T_\n 
 &= \frac{4r E S}{|A B|(B^2 + r^2 S^2)}
 \ ,
 \label{axion-VF-general}
\end{align}
while all other components in $(t,r,\vartheta,\varphi)$ coordinates vanish.
Hence the axion vanishes if $S=0$.
Explicitly, this gives
\begin{align}
 \tilde T_t &= \r^{-2}\del_t\tilde \rho = -\frac{2 r B S}{|A B|(B^2 + r^2 S^2)}
 \left(-r^2 E^2  \left(\ln\Big|\frac{B}{r S}\Big|\right)' + 
       A^2 \left(\ln\Big|\frac{B}{r^3 S}\Big|\right)'\right) \ , \nn\\
 \tilde T_r &= \r^{-2}\del_r\tilde \rho 
 = -\frac{2 r^2 E B^2 S}{|A B|(B^2+r^2 S^2)}\left(\ln\Big|\frac{rS}{B}\Big|\right)' \ .
\end{align}
In particular,
the axion $\tilde\r$ is static if and only if
\begin{align}
 (A^2 - r^2 E^2)\left(\ln\Big|\frac{B}{r S}\Big|\right)' =  2 r^{-1} A^2 
 \ ,
\end{align}
or 
\begin{align}
 \left(\ln\Big|\frac{B}{r S}\Big|\right)' = \frac{2}{r} \frac{1}{1 - \frac{r^2E^2}{A^2}} \ .
 \label{axion-static}
\end{align}
Therefore for a static axion we obtain
\begin{align}
 \tilde T_r &= \r^{-2}\del_r\tilde \rho 
 = \frac{2 r^2 E B^2 S}{|A B|(B^2+r^2 S^2)}\frac{2}{r} \frac{1}{1 - \frac{r^2E^2}{A^2}} \ .
\end{align}
Taking into account the divergence constraint \eq{E-B-S-relation}, 
this reduces to
\begin{align}
 \tilde T_r &= \frac{4 r^4 \r^2 E^2 B^2 S}{c_e |A B|} \frac{1}{1 - \frac{r^2E^2}{A^2}} \ .
 \label{axion-static-Tr}
\end{align}

\paragraph{The equation of motion for the axion.}

It turns out that the equation of motion \eq{div-T-dEdE} for the axion  
holds identically for any spherically invariant configuration.
This can be seen explicitly in Cartesian coordinates, where 
the right-hand side of \eq{div-T-dEdE-form} is  obtained using \eq{epsdxdx-id}  as
\begin{align}
 T^{\dot\a}\wedge  T_{\dot\a} &= T^{i}\wedge  T^{j} \d_{ij} 
 = 4 \big(3 E S + r (S' E + S E') \big) dt d^3 x  \nn\\
 &= 4 S E\big(3 + r \frac{d}{d r}\ln|SE| \big) d^4 x
 = 4r S E\big( \ln|r^3SE| \big)' d^4 x
 ,
\end{align} 
while the left-hand side of \eq{div-T-dEdE-form} is obtained using \eq{axion-def} and \eq{axion-VF-general} as
\begin{align}
 d * (\rho^{-4}d\tilde{\rho})&=
 -\sqrt{|G|}\nabla^{(G)}_\mu (G^{\mu\nu}\r^{-4}\del_\nu\tilde\r) d^4x
 = -\del_\mu\big(\r^{-2}\sqrt{|G|}G^{\mu\nu}\tilde T_\nu\big) d^4x
 \nonumber \\
 &= -\frac 12 \del_\mu\big(\varepsilon^{\n\s\eta\mu}\tensor{T}{_\nu_\s_\eta}\big) d^4x
 = \del_i (4 S E x^i) d^4x
 = 4r S E\big( \ln|r^3SE| \big)' d^4 x
 \ . \nn 
\end{align} 
Therefore the equation of motion \eq{div-T-dEdE} for the axion  holds identically.

\section{Solving the geometric equations of motion}
\label{sec:solution-general}

\subsection{The $E\neq 0$ solutions}
\label{sec:E-neq0}

In this section, we derive the general solutions for $E\neq 0$,
using the divergence constraint \eqref{B-S-AE}.
For the most general ansatz \eq{ansatz}, 
the equation of motion \eq{torsion-frame-eom-3}
for $\dot\a=0$ and $\mu=r$
gives a second order differential equation, which
can be reduced to 
\begin{align} 
 \left( \r (\ln |A|)' \right)^2
 = \frac{c_0^2c_e}{2} E\left(\ln\Big|\frac{A}{r E}\Big|\right)' \ ,
 \label{rho-A-eq-0}
\end{align}
for an arbitrary real constant $c_0$.
Note that both sides of \eqref{rho-A-eq-0} are positive since
$(\ln|\frac{A}{r E}|)'$ is positive,
as seen from \eqref{E-B-S-relation} and \eqref{dilaton-relation-1},
and $c_eE$ is positive by definition \eqref{E-B-S-relation}.
Combining this with the divergence constraint \eq{dilaton-relation-1},
we obtain the relations\footnote{
The sign of $c_0$ is defined by \eqref{eom-1-AE}.
}
\begin{align}
 (\ln |A|)' 
 &= c_0 r B E \ , \nn\\
 (\ln|r E|)' 
 &=-\frac{2r^2 B^2 E \rho^2}{c_e}
 +c_0 r B E \ .
 \label{eom-1-AE}
\end{align}
Then the combination of the equation of motion for ${\dot\a} = 0$ and $\mu=t$
with \eq{rho-A-eq-0} implies 
$m=0$,
which is assumed as an approximation 
in the present static setup.
The equation of motion for ${\dot\a} = i$ and $\mu=t$
with the condition \eq{axion-static} for a static axion
leads to
\begin{align}
  (\ln|r B|)' &= - \frac 1r \frac{1}{r^2 \frac{E^2}{A^2} - 1} \ ,
  \label{axion-static-eq-1}
\end{align}
using the above relations \eq{rho-A-eq-0}  and \eq{eom-1-AE}.
Then the difference between \eqref{axion-static} and \eqref{axion-static-eq-1}
gives
\begin{align}
 (\ln|r^2S|)'=\frac{1}{r}\frac{1}{r^2\frac{E^2}{A^2}-1}
 \ ,
 \label{S-eq}
\end{align}
which can be written only by $A$ and $E$ using \eqref{B-S-AE}.
By eliminating the derivatives of $A, E$ and then $B$ in \eqref{S-eq}
using \eq{eom-1-AE} and \eq{axion-static-eq-1}, 
one obtains
\begin{align}
 (\ln\rho )'
 &= -\frac{1}{2} 
  \left(\frac{2 }{r^2\frac{E^2}{A^2}-1}\Big(\frac 1r - \frac{2r^2 B^2 E \rho^2}{c_e} \Big)
  - \frac{2r^2 B^2 E \rho^2}{c_e}
  + c_0 r B E\right) \nn\\
%
 &= -\frac{1}{r^2\frac{E^2}{A^2}-1}\Big(\frac 1r - \frac{2r^2 B^2 E \rho^2}{c_e} \Big)
 -\frac{1}{2} (\ln|r E|)'
 \ ,
 \label{eom-1-rho}
\end{align}
and hence  
\begin{align}
 (\ln|r E \rho^2| )' &=  -\frac{2}{r^2\frac{E^2}{A^2}-1}\Big(\frac 1r - \frac{2r^2 B^2 E \rho^2}{c_e} \Big)
 .
 \label{eom-1}
\end{align}
The equations of motion \eqref{torsion-frame-eom-3}
for ${\dot\a} = i$ and $\mu=r$ and 
those for any $\dot\alpha$ and $\mu=\vartheta$, $\varphi$
are automatically satisfied if the above equations are satisfied.

We have therefore obtained a coupled system of 4 nonlinear differential
equations for 4 functions $A(r),B(r),E(r)$ and $\r(r)$.
Solving such a system seems like a formidable task. Remarkably,
the general solution can be obtained rather explicitly.
To achieve this,
we combine the above equations to get
\begin{align}
 \left( \ln\Big|\frac{E \rho^2}{r B^2}\Big| \right)'
 &=  \frac{4}{r^2\frac{E^2}{A^2}-1} \frac{r^2 B^2 E \rho^2}{c_e}
 = -\left( \ln \left|\frac{A^2}{r^2 E^2} -1\right| \right)'
 ,
\label{eom-Erho2}
\end{align} 
using \eq{dilaton-relation-1} 
in the second step.
This can be integrated as   
\begin{align}
\frac{E \rho^2}{r B^2}\left(\frac{A^2}{r^2 E^2}-1\right) &= c_1 
\label{AE-1}
\end{align}
for some constant $c_1$, so that 
\begin{align}
 \boxed{\
 \frac{A^2}{r^2 E^2} = 1 + c_1 \frac{r B^2}{E \rho^2}
 \ . 
 }
 \label{AE-equation}
\end{align}
Then, \eqref{AE-equation} and \eqref{axion-static-eq-1} gives
\begin{align}
    (\ln |B|)' &= \frac{E \rho^2}{c_1 r^2 B^2}  \ .
\label{b2-deriv}
\end{align}
Together with \eq{eom-Erho2} we obtain
\begin{align}
%
 \frac 1{E \rho^2}\left(\ln\Big|\frac{E \rho^2}{r}\Big|\right)' + \frac{4}{c_e c_1} r E \rho^2 
 &= \frac{2}{c_1}\frac{1}{r^2 B^2} - \frac{4}{c_e} r^2 B^2 \ .
  \label{Erho-B-relation}
\end{align} 
Rewriting $E\rho^2$ by $B^2$ using \eq{b2-deriv}
leads to a second-order ordinary differential equation (ODE) for $B$, which is solved by
\begin{align}
\boxed{\ 
  -|B|^{4 c_3} \left((c_3+1) c_e \ \frac 1{r^4} + c_1 B^4\right) = c_2 
 \ , 
   \ }
  \label{B-equation}
\end{align}
for arbitrary integration constants $c_2$ and $c_3$.
Hence $B = B(r)$ satisfies a simple algebraic relation,
which can in fact be solved explicitly for $r$ as a function of $B$:
\begin{align}
 r^4 = \frac {-(c_3+1) c_e}{c_1 B^4  + c_2  |B|^{-4 c_3}}
 \ . 
 \label{r-B-equation}
\end{align}
The other equation for the integration constants $c_2$ and $c_3$
can be obtained by differentiation of \eq{B-equation}.
The differentiation leads to
\begin{align}
 (\ln |B|)' &= \frac 1r \frac{c_e}{c_1 r^4 B^4 + c_3c_e}
 ,
 \label{dB-equation}
\end{align}
for $B\ne 0$.
Together with \eq{b2-deriv},
one can derive an algebraic expression for $E\rho^2$
in terms of $B$ and $r$:
\begin{align}
 E \rho^2 =  c_1 c_e\frac{r B^2}{c_1\, r^4 B^4 + c_3 c_e} 
 \ .
 \label{Erho2-eq}
\end{align} 
Then $\frac{A^2}{E^2}$ is obtained explicitly from \eq{AE-equation} as 
\begin{align}
 \frac{A^2}{r^2 E^2} = \frac{1}{c_e}\big((1 + c_3)c_e + c_1\, r^4 B^4 \big)
 \ .
\label{AE-equation-2}
\end{align}
Using \eq{B-equation}, this can be written more compactly as
\begin{align}
\boxed{\ 
 r^6 E^2 = -\frac{c_e}{c_2} A^{2} |B|^{4 c_3}
 \ .
 }
\label{E2-B2-A2-relation}
\end{align}
It remains to solve one more equation for $E$ (or $A$), but this can no longer 
be achieved in algebraic form. However, we can 
combine \eq{axion-static-eq-1} and \eq{eom-1-AE} for $B$ and $E$ with the above results to obtain
\begin{align}
 (\ln|r^2 BE|)' &= - \frac 1r \frac{1}{r^2 \frac{E^2}{A^2} - 1}
 - \frac{2r^2 B^2 E \rho^2}{c_e}+ c_0 r B E \ .
\end{align}
Together with \eq{AE-equation} and \eq{Erho2-eq},
one obtains
\begin{align}  
 (\ln|r BE|)' - c_0 r B E
 &= - \frac 1r\frac{2c_1 r^4 B^4 - c_e}{c_1 r^4 B^4 + c_3 c_e} \ .
 \label{BE-B-relation-0}
\end{align}
This still involves  $B, E$ and $r$. However,
combining this equation with \eq{dB-equation} in the form
\begin{align}
 B\frac {d\ln|r B|}{dB} 
 &= 1 + c_3 + \frac{c_1}{c_e} (rB)^4 
 ,
 \label{drB-equation}
\end{align}
one can rewrite it as 
\begin{align}
 \frac {1}{(rE)^2} \frac{d(r E)}{d(r B)}
 &= c_0
  - \frac{c_0 c_e + 2c_1 (r B)^3 \frac{1}{rE}}{ c_e(1 + c_3) + c_1 (rB)^4 } \ ,
\end{align}
which is an ODE relating $(rB)$ and $(rE)$.
We now introduce the effective radius 
\begin{align}
 \tilde r^2 := \frac{c_e}{r E}
 = G_{\vartheta\vartheta}=\frac{G_{\varphi\varphi}}{\sin^2\vartheta}
 \ 
 \label{tilde-r-def}
\end{align}
(cf.~\eq{eff-metric-rtilde}),
which is positive since $E/c_e>0$. 
Then the above equation takes the form of a non-linear ODE 
relating $\tilde r^2$ and $(rB)$:
\begin{align}
 \frac{d\tilde r^2}{d(rB)} 
 &= \frac{2c_1 (r B)^3\, \tilde r^2+c_0c_e^2}{c_e(1 + c_3) + c_1 (rB)^4 } \ 
 -c_0c_e \ .
\label{dr2-drB}
\end{align}
This can be integrated in terms of the hypergeometric function 
$_2F_1(\frac{1}{4},\frac{1}{2};\frac{5}{4};z)$ as follows
\begin{align}
 \boxed{\
 \tilde r^2 
 =-\Big(\frac{(c_3+1) c_e}{c_1}\Big)^{\frac{1}{4}}
 \frac{c_0c_e z}{2(c_3+1)}\left((2 c_3+1)
 \sqrt{1+z^4} \, _2F_1\Big(\frac{1}{4},\frac{1}{2};\frac{5}{4};-z^4\Big)-1\right)
 +c_4 \sqrt{1 +z^4}
 \ ,
 \ }
 \label{rtilde-rB-solution}
\end{align}
where $c_4$ is a new integration constant arising from the homogeneous term, 
and we define
\begin{align}
 rB = \Big(\frac{(c_3+1) c_e}{c_1}\Big)^{\frac{1}{4}} \, z
 \ ,
 \label{rB_and_z}
\end{align}
for better readability. One can then verify that all components 
of the equation of motion 
are satisfied including
the equations \eq{Box-rho-onshell} and \eq{div-T-dEdE} for the dilaton and the axion, respectively.
We have therefore obtained the general solution for the case of $E\neq 0$.

Let us briefly take a look at some constraints on the functions and the parameters.
The physical regime, which we will focus on in the following, is
\begin{align}
 c_3\ge 0 \ ,
 \qquad c_ec_1>0 \ ,
 \qquad c_ec_2<0 \ .
 \label{signature-cond} 
\end{align}
These arise as follows:
As noted before, 
$c_e$ has the same sign as $E$ by definition, \eqref{E-B-S-relation}.
Then, $c_1$ and $c_e$ need to have the same sign 
so that the relation \eqref{AE-equation-2} at large $r$
is consistent for real functions $A$, $B$ and $E$.
Therefore, both signs of $c_e$ and $c_1$ match that of $E$.
This is consistent with a metric with the physical signature $G_{tt} < 0$ as appropriate for the 
large radius regime (see \eqref{eff-metric-Eneq0}).
$c_3>0$ follows from
\begin{align}
 S^2 = \frac {c_e}{r^5 E \r^2} - \frac {B^2}{r^2}
 = \frac{c_3 c_e}{c_1 r^6 B^2}
 \ ,
 \label{r2S2-eq}
\end{align}
which is obtained using \eqref{E-B-S-relation}
and \eqref{Erho2-eq}.
Then, the condition \eqref{signature-cond} properly implies $z^4 > 0$. 
In addition, 
\eqref{E2-B2-A2-relation} imposes that $c_e/c_2<0$.
Another constraint is that
$\ln|B|$ and $\ln|r B|$ are monotonically increasing functions.
The monotonicity of $\ln|B|$ is seen in \eqref{b2-deriv}.
Then from \eqref{drB-equation}, it turns out that 
$\ln|r B|$ is also monotonically increasing in $r$.
This implies $B$ is positive\footnote{
$B$ can change its sign only at $r=0$.
In the meantime,
$A$ can continuously change its sign at $r=0$ 
only if the sign of $(\ln |A|)'$ is positive as $r$ goes to positive infinity.
Nevertheless, we should discard the negative part of $A$ and/or $B$ 
since we focus on the physical radius, $r>0$.
} for any $r$ $(> 0)$.
Hence the physically meaningful region of $z$ is restricted to $z>0$.
Then, in the same manner, $A$ is positive for any positive $r$
because the sign of $(\ln |A|)'$ is fixed to the one same as $c_ec_0$ since $B>0$.

The solution clearly has an intricate analytic structure,
which should be explored in more detail elsewhere.

\paragraph{Asymptotic regime.}

For $r\to\infty$,  the frame should approach the background frame
\eq{background-frame-loc}.
This means that 
\begin{align}
 B(r), A(r), \r(r)  &\to \text{const} \neq 0, \qquad r\to\infty \ , \nn\\
 E(r), S(r),   &\to  0, \qquad \qquad \quad \ \ r\to\infty \ .
\end{align}
We then consider the asymptotic expansion for $r\to\infty$.
For $B(r)$, this is obtained from \eq{B-equation}
\begin{align}
\boxed{ \
 B = b_0 + \frac{b_4}{r^4} + \frac{b_8}{r^8}
 +O(r^{-12}),
 \qquad  b_0^{4 (c_3+1)} = - \frac{c_2}{c_1}
 \ ,
 \ }
 \label{b-expansion-explicit}
\end{align}
and the remaining coefficients $b_4, b_8, \cdots$ can be determined in terms of $c_e$, $c_1$ and $c_2$ if desired.
To proceed, we note that
the asymptotic behavior of the hypergeometric function is given by
\begin{align}
 z\, {}_2F_1\big(\frac{1}{4},\frac{1}{2};\frac{5}{4};-z^4\big)
 = \frac{\Gamma \left(\frac{1}{4}\right) \Gamma \left(\frac{5}{4}\right)}{\sqrt{\pi }}
 -\frac{1}{z} + O(z^{-5}),
 \qquad z \  \to \infty \ .
 \label{F21-asymptotic}
\end{align}
Then \eq{rtilde-rB-solution} simplifies as 
\begin{align}
\tilde r^2 
 &=-\Big(\frac{(c_3+1) c_e}{c_1}\Big)^{\frac{1}{4}}
 \frac{c_0c_e z}{2(c_3+1)}\Big((2 c_3+1) 
  \, \Big(\frac{\Gamma \left(\frac{1}{4}\right) \Gamma \left(\frac{5}{4}\right)}{\sqrt{\pi }} z - 1 \Big)-1\Big)
   + c_4 z^2 + O(z^{-2}) \nn\\
   &= z^2\big(\alpha_0 + \alpha_1 z^{-1} + O(z^{-4})\big)
   \label{rtilde-z}
\end{align}
for large $z$, where
\begin{align}
 z = \Big(\frac{(c_3+1) c_e}{c_1}\Big)^{-\frac{1}{4}} \, rB 
 = \Big(\frac{c_1}{(c_3+1) c_e}\Big)^{\frac{1}{4}} b_0 \, r \ + O(r^{-3})
 \ ,
\end{align}
using \eq{b-expansion-explicit}, and
\begin{align}
 \alpha_0
 &= -\frac{c_0c_e}{2}(2 c_3+1)
 \Big(\frac{c_e}{c_1(c_3+1)^{3}}\Big)^{\frac{1}{4}} \frac{ \Gamma \big(\frac{1}{4}\big) 
  \Gamma \big(\frac{5}{4}\big)} {\sqrt{\pi }}
 + c_4
 \ ,
 \nonumber \\
 \alpha_1
 &= c_0c_e \Big(\frac{(c_3+1)c_e}{c_1}\Big)^{\frac{1}{4}}
 \ .
 \label{a-b-def}
\end{align}
Together with \eq{tilde-r-def}, this gives 
\begin{align}
 \alpha_0 + \alpha_1 z^{-1} + O(z^{-4})
 = \frac{\tilde r^2}{z^2}
 = c_e \sqrt{\frac{(c_3+1)c_e}{c_1}} \frac{1}{(r E)(r B)^2}
 \ ,
 \label{z-rtilde}
\end{align}
and hence 
\begin{align}
  (r E)(r B)^2 = c_e \sqrt{\frac{(c_3+1)c_e}{c_1}} \frac{1}{\alpha_0+ \alpha_1 z^{-1}} + O(z^{-4}) \ .
  \label{EB2-expansion}
\end{align}
Then combining with
\eq{AE-equation-2} in the form
\begin{align}
 A^2 
 = \frac {c_1}{c_e}(rB)^4 (rE)^2 + O(r^{-4})
 \ ,
 \label{A-expand-1}
\end{align}
and assuming $A(r)$ approaches a positive constant $a_0$, at large $r$,
we obtain
\begin{align} 
 \boxed{\
 A
 =a_0 \left( 1
 +\frac{c_e}{|c_e|}\sqrt{\frac{c_e}{c_1}} \frac{a_0c_0}{b_0}
 \frac{1}{r}
 \right) ^{-1}
 + O(r^{-4})
 \ ,
 \ }
\label{A-expand}
\end{align}
and a relation of parameters through $\alpha_0$,
\begin{align}
 \frac{|c_e| \sqrt{c_3+1}}{a_0}
 =\alpha_0
= -\frac{c_0c_e}{2}(2 c_3+1)
 \Big(\frac{c_e}{c_1(c_3+1)^{3}}\Big)^{\frac{1}{4}} \frac{ \Gamma \big(\frac{1}{4}\big) 
  \Gamma \big(\frac{5}{4}\big)} {\sqrt{\pi }}
 + c_4
 \ ,
 \label{A-expand-a0}
\end{align}
assuming $\alpha_0>0$ since $\tilde r^2>0$.
Therefore, using \eqref{AE-equation-2} again, we obtain
\begin{align}
 \boxed{\
  r^3 E 
  = \frac{c_e}{\sqrt{c_ec_1}}\frac{a_0}{b_0^2}
  \left( 1
  +\frac{c_e}{|c_e|}\sqrt{\frac{c_e}{c_1}} \frac{a_0c_0}{b_0}
  \frac{1}{r}
  \right)^{-1}
  + O(r^{-4})
  \ ,
  \ }
  \label{E-asymptotics}
\end{align}
since $a_0>0$ and $E$ has the same sign as $c_e$. 
Next, $\rho^2$ is obtained from \eq{Erho2-eq} as
\begin{align}
 \boxed{\
 \rho^2  
 = \frac{\sqrt{c_ec_1}}{a_0}
 \left( 1
 +\frac{c_e}{|c_e|}\sqrt{\frac{c_e}{c_1}} \frac{a_0c_0}{b_0}
 \frac{1}{r}
 \right)
 + O(r^{-4})
 \ .
 \ }
 \label{rho-asymptotics}
\end{align} 
Finally, 
using \eqref{r2S2-eq},
we obtain 
\begin{align}
\boxed{\
 r^3 S = \pm \frac{1}{b_0 }\Big(\frac{c_3c_e}{c_1}\Big)^{\frac{1}{2}} \ + O(r^{-4})  \ .
 \ }
 \label{S-asymptotics}
\end{align}
We have thus determined explicitly the three leading terms of
asymptotic expansion of all the functions $A,B,E,S,\rho$ at $r\to\infty$.
Remarkably, the leading behavior is the same as in  the simple solution
\eq{special-solution-HS}, even though $S\sim O(r^{-3}) \neq 0$.
To meet the boundary conditions given by the background frame \eq{background-frame-loc},
 $a_0$ and $b_0$ are determined, with $b_0 = a_0$.
Furthermore,  $\rho^2$ must reduce to the background \eq{dilaton-BG}.
This provides three equations for the constants
$c_e,c_0,c_1,c_2,c_3,c_4$, since 
the other fields $E$ and $S$ vanish at $r\to\infty$ as required.
We therefore have three free parameters,
which specify the localized solution.
These presumably correspond to a mass parameter of the effective metric, and two extra scales for the dilaton and axion.

The constraints resulting from the boundary conditions can be made 
more explicit for the case
$A \to a_0 = 1, B\to b_0 = 1$ and $\rho\to\rho_0$ 
as $r\to \infty$,
where $\rho_0$ is set much larger than 1 as we are interested in the late time regime.
Then we obtain
\begin{align}
 c_2=-c_1,
 \qquad
 c_4
 =\rho_0^4\sqrt{\frac{c_3+1}{|c_1|}}\left(
 \frac{1}{\sqrt{|c_1|}}
 +\rho_0\frac{c_0(2 c_3+1)}{2c_1(c_3+1)^{\frac{5}{4}}}
 \frac{ \Gamma \big(\frac{1}{4}\big) 
 \Gamma \big(\frac{5}{4}\big)} {\sqrt{\pi }}
 \right) ,
 \qquad
 c_e=\frac{\rho_0^4}{c_1}
 .
\end{align}

\subsection{The $E = 0$ solutions}
\label{sec:E=0}

For $E=0$,
the divergence constraint 
reduces to only one equation \eqref{dilaton-constr-E=0}.
Then the condition $m=0$ can be derived from
the equations of motion \eqref{torsion-frame-eom-3}
again in a similar manner to the $E\neq 0$ case.
Then the equation of motion \eqref{torsion-frame-eom-3} 
for $\dot\a= i$ and $\mu=r$ gives
\begin{align}
r^4 B'+\left(\frac{r^6 S^2}{B}\right)' = 0 \ .
\label{eom-e0-2}
\end{align}
As in the case of $E\neq 0$, we also impose that the axion is static, $\tilde T_t = 0$,
which is equivalent to \eq{axion-static} with $E=0$, i.e.
\begin{align}
 \left(\frac{r^3 S}{B}\right)' = 0 \ .
\end{align}
This in turn implies via \eq{axion-VF-general}
that the axion is trivial, 
\begin{align}
 \tilde T_\mu = 0 = \del_\mu\tilde\rho 
 \ ,
\end{align}
and does not contribute to the energy-momentum tensor.
Together with \eq{eom-e0-2}, it follows that 
\begin{align}
 B &= b_0 \ , \qquad S = \frac{s_0}{r^3} \ ,
 \label{B-S-explicit-E0}
\end{align}
with constants $b_0$ and $s_0$,
where we assume $b_0>0$ since $B$ should reduce to the background \eq{background-frame-loc} at $r\to\infty$.
Then the divergence constraint 
is simplified as 
\begin{align}
 (\ln |A \rho^2|)' = \frac{2 s_0^2}{r^5 b_0^2+ r s_0^2}  \ .
 \label{Arho2-E0}
\end{align}
This is solved by
\begin{align}
 \r^2 = \frac 1{A}\frac{\tilde c_\rho}{\sqrt{b_0^2 + s_0^2 r^{-4}}} \ ,
 \label{rho-E0}
\end{align}
for some parameter $\tilde c_\rho$.
The equation of motion for $\dot\a=0$ and $\mu=t$
with \eq{rho-E0}
reduces to
\begin{align}
 0
 &= -\frac{2 b_0^2 r^3 }{b_0^2 r^4+ s_0^2}(\ln |A|)'
 - \frac{A''}{A}
 +2 \left(\frac{A'}{A}\right)^2
 \ .
 \label{Aprime-equ-E0}
\end{align}
This implies either $A'=0$, which will be recovered in \eq{Aprime-0},
or otherwise
\begin{align}
 0
 &= -\frac 12\left( \ln(b_0^2 r^4+ s_0^2) \right)'
 - (\ln|A'|)' +  2 (\ln |A|)'
 \ .
\end{align}
Hence
\begin{align}
 \Big(\frac{1}{A}\Big)'  &= \frac {c_{a1}}{\sqrt{b_0^2 r^4+ s_0^2}}
 \ ,
 \label{Aprime-elliptic}
\end{align}
where $c_{a1}$ is an arbitrary parameter.
This leads again to two branches:
$s_0=0$ and $s_0\neq 0$.

\paragraph{Special case $s_0=0$.}

In this case, \eq{Aprime-elliptic} reduces to 
$ \big(\frac{1}{A}\big)' = \frac {c_{a1}}{b_0 r^2}$, which leads to the solution
\begin{align}
 \frac{1}{A} &= -\frac {c_{a1}}{b_0 r} + c_{a2} \ , \qquad
 B = b_0 \ , \qquad  E = 0 = S \ ,
 \label{special-solution-HS-2}
\end{align}
for a new parameter $c_{a2}$,
and the dilaton constraint \eq{Arho2-E0} reduces to
\begin{align}
 A \r^2 =\frac{\tilde c_\rho}{b_0} 
 \ .
 \label{special-solution-HS-rho-2}
\end{align}
We have recovered precisely the solution 
\eq{special-solution-HS} found in \cite{Fredenhagen:2021bnw}.
Three of the four parameters $\tilde c_\r, b_0, c_{a1},c_{a2}$
are again determined by the boundary condition at $r\to\infty$,
which leaves one physical parameter.
This corresponds to the total ``mass'' of the 
solution, as discussed in some more detail in section \ref{sec:metric}.

\paragraph{Generic case $s_0 \neq 0$.}
 
In this case we can integrate \eq{Aprime-elliptic} as follows
\begin{align} 
 \frac{1}{A} &= \int\frac {c_{a1}}{\sqrt{b_0^2 r^4+ s_0^2}} dr 
  = c_{a2} + \frac{c_{a1}}{|s_0|} r \, _2F_1\left(\frac{1}{4},\frac{1}{2};\frac{5}{4};-\frac{b_0^2 r^4}{s_0^2}\right) \nn\\
  &= c_{a2} + 
  c_{a1}\frac{\Gamma \left(\frac{1}{4}\right) \Gamma \left(\frac{5}{4}\right)}{\sqrt{\pi b_0 |s_0|}}
  -\frac{c_{a1}}{b_0} \frac 1r + O\big( r^{-4}\big) 
 \ ,
 \label{Aprime-elliptic-2}
\end{align}
using \eq{F21-asymptotic} in the last step.
Moreover, \eq{rho-E0} leads to
\begin{align}
 \r^2 = \frac 1{A}\frac{\tilde c_\rho}{b_0} 
 + O\big( r^{-4}\big) \ .
\end{align}
We observe again the same asymptotic behavior as \eq{special-solution-HS} 
even though $S=s_0 r^{-3} \neq 0$.
Nevertheless the axion is  trivial and does not contribute to 
the energy-momentum tensor, as pointed out above. 
Three of the five parameters $\tilde c_\r, b_0, s_0, c_{a1},c_{a2}$
are again determined by the boundary condition at $r\to\infty$,
which leaves two physical parameters. These should correspond to the total 
``mass'' and one further scale $s_0$, which characterizes $S\neq 0$.
The meaning of the latter can be understood  from the special case 
 $c_{a1} = 0$, where the solution reduces to
\begin{align}
 \r^2 = \frac 1{a_0}\frac{\tilde c_\rho}{\sqrt{b_0^2 + s_0^2 r^{-4}}}\ , \qquad 
 B = b_0 \ , \quad A=a_0\ , \qquad S = \frac{s_0}{r^3} \ .
 \label{Aprime-0}
\end{align}
Three of the four parameters $a_0,b_0, \tilde c_\r, s_0$
are again determined by the boundary condition at $r\to\infty$,
which leaves only one physical parameter $s_0$.
Then the asymptotic mass parameter in the corresponding metric \eq{A-E0-metric} vanishes,
and the meaning of this solution remains to be understood.

For $A \to a_0 = 1, B=b_0 = 1$ and $\rho\to\rho_0$ as $r\to \infty$,
the constraints resulting from the boundary conditions are 
\begin{align}
 c_{a2}=1-c_{a1}\frac{\Gamma \left(\frac{1}{4}\right) \Gamma \left(\frac{5}{4}\right)}{\sqrt{\pi |s_0|}}
 ,
 \qquad
 \tilde c_\rho=\rho_0^2
 \ .
\end{align}

\paragraph{Remarks on cosmological solutions.}

Since we have found the most general rotationally invariant static solution,
it is natural to ask about possible cosmological solutions.
In principle there should indeed be more general cosmological solutions,
with an asymptotic behavior which is different from the asymptotically 
flat case under consideration here.
It should be possible to study them systematically using a suitably adapted 
framework; however then the restriction to the static case must be relaxed.
This is the reason why we have not obtained such cosmological solutions,
and we leave that for future work.

\section{The effective metric}
\label{sec:metric}

For the spherically symmetric ansatz under consideration, the
effective metric  takes the form    \eq{eff-etric-general},
or equivalently
\begin{align}
  ds^2_G &=  \r^2\big(-(A^2-r^2 E^2) dt^2 + B^2 dr^2 
   + 2 r B E dt dr
  + (B^2+r^2 S^2) r^2d\Omega^2 \big)
  \ ,
  \label{eff-etric-general-2}
\end{align}
where $d\Omega^2 = d\vartheta^2 + \sin^2\vartheta d\varphi^2$.
In this section, we will elaborate this metric more explicitly for the solutions
found above.

\subsection{The $E\neq 0$ branch}
\label{sec:E-neq0-metric}

Using the divergence constraint \eq{E-B-S-relation} and the on-shell equation \eq{AE-equation},
the metric can be written as 
\begin{align}
  ds^2_G &=  -  c_1 r^3 E B^2 dt^2 + \r^2 B^2 dr^2 + 2 \r^2 r B E dt dr
  + \frac{c_e}{r^3 E}\, r^2 d\Omega^2 \ .
  \label{eff-metric-Eneq0}
\end{align}
The off-diagonal term can be eliminated by a suitable redefinition 
$t = \tilde t + \psi(r)$ 
with
\begin{align}
 \psi' = \frac{\r^2 r B E}{c_1 r^3 E B^2} = \frac{\r^2}{c_1 r^2 B}
 .
 \label{psi-redefinition}
\end{align}
Then we obtain the metric in a diagonal form 
\begin{align}
  ds^2_G
  &= - c_1 r^3 E B^2 d\tilde t^2 
  + \r^2 \left(B^2 + \frac{1}{c_1}\frac{\r^2 E }{r}\right) dr^2
  + \frac{c_e}{r^3 E}\, r^2 d\Omega^2
\end{align}
(we will drop the tilde on $\tilde t$ henceforth).
The standard signature, $G_{tt}<0$, $G_{rr}>0$, $G_{\theta\theta}>0$, is
guaranteed as long as \eqref{signature-cond} holds.
Furthermore,
defining the effective radial variable $\tilde r$ as
\begin{align}
 \tilde r^2 
 = \frac{c_e}{r E}
 \ ,
 \label{r-tilde-def-Enon0}
\end{align}
which is positive for any $c_e$, 
we can bring the  metric to the following normal form
\begin{align}
 ds^2_G 
 &=  -c_1 c_e\frac{(rB)^2}{\tilde r^2} d t^2 
 + \r^2 \left(B^2 + \frac{1}{c_1}\frac{\r^2 E }{r}\right) 
 \left(\frac{dr}{d\tilde r}\right)^2 d\tilde r^2
 + \tilde r^2 d\Omega^2  \nn\\
 &= G_{tt} dt^2 + G_{\tilde r\tilde r}d\tilde r^2
 + \tilde r^2 d\Omega^2  \ .
 \label{eff-metric-rtilde}
\end{align}
Note that $-\tilde r^2 G_{tt}$ is always positive for any $r$
as long as $\tilde r^2>0$ is satisfied.
This means that 
there is no way to recover the Schwarzschild 
geometry from the $E\neq 0$ solutions.
There might be a 
radius where $G_{tt}=0$ due to  $B=0$, 
but the sign of $G_{tt}$ will never change. 


To make the radial metric $G_{\tilde r\tilde r}$ more explicit,
we need
\begin{align}
 \frac{\tilde r'}{\tilde r} = \frac{1}{\tilde r} \frac{d\tilde r}{d r} 
 &=  - \frac 1{2r}\Big( -\frac{2r B \rho^2}{c_e} + c_0 \Big)
 (r E)(r B)
 \ ,
\end{align}
which is obtained  using \eq{eom-1-AE}.
Therefore 
\begin{align}
 G_{\tilde r\tilde r}  
 &= \r^2 \left( B^2 + \frac{1}{c_1}\frac{\r^2 E }{r} \right)
 \left( \frac{dr}{d\tilde r} \right)^2
 = \frac{4\r^2}{(rB)^2} \frac{ (rB)^2\tilde r^2 + \frac{c_e}{c_1}\r^2}
 {\big( 2(rB) \rho^2 - c_0 c_e\big)^{2}} \ .
\end{align}
This can be made more explicit using \eq{Erho2-eq}
\begin{align}
 \rho^2 
 = \frac{c_1 r^2 B^2}{c_1 r^4 B^4 + c_3 c_e} \tilde r^2
 \ .
\label{rho-explicit}
\end{align}

Clearly $\rho^2 \to \text{const}$ for $r\to\infty$, as it should.
It seems that $G_{\tilde r\tilde r}$ is regular unless 
$rB\rho^2=c_0 c_e/2$ holds, at which $\tilde r'=0$, 
or $rB=0$ 
holds if $c_3=0$.


Upon inverting the relation \eq{rtilde-rB-solution} 
between $\tilde r$ and $z$, 
the metric is fully determined as a function of $\tilde r$.
We can make this more explicit in the asymptotic regime $\tilde r\to\infty$.

Some representative plots are shown in Fig.~\ref{fig:metric_rt2_Eneq0}
against the variable $z$ defined in \eqref{rB_and_z}.
In these graphs, we set the asymptotic behavior of the metric
as $A\to 1$, $B\to 1$, and $\rho\to \rho_0=3$.
The physically meaningful region is  $\tilde r^2\geq 0$ and $z>0$.
However, while $\tilde r^2$ 
in the center graphs and the bottom-left graph in Fig.~\ref{fig:metric_rt2_Eneq0}
monotonically increases in the region where $\tilde r^2>0$,
it is not monotonic in the other graphs. 
In particular, $\tilde r^2$ in the top-left and bottom-right graphs have a minimum.
These different behaviors of $\tilde r^2$ are depicted in Fig.~\ref{fig:rt2_Eneq0},
the condition of which can be read off from \eqref{dr2-drB} in principle.

In the case with a minimum of $\tilde r^2$,
since we assume $B\to 1$ at large $r$,
the physical region should be the one where $z$ is positive at large $\tilde r$.
The physical meaning of the other region, where $z$ becomes smaller,
is not clear yet. 
Since the radial parameter $\tilde r$ grows 
in both directions, the metric is reminiscent of a wormhole, which could be linking the two sheets of the cosmic background \cite{Sperling:2019xar}.
This is consistent with the observation that $\tilde r^2$ tends to be 
strictly positive, and $\tilde r\to 0$ only if $E\to\infty$. 

Another  interesting observation,
which can be understood from \eqref{eff-metric-rtilde},
is that
$G_{tt}$ diverges as $\tilde r$ approaches zero as seen in the top-right, center and bottom-left graphs;
meanwhile, $G_{\tilde r\tilde r}$ diverges 
as the first derivative of $\tilde r^2$ approaches zero as seen in the top and bottom-right graphs.

Note that $G_{tt}$ approaches $\rho_0^2$ at large $z$
while $G_{\tilde r\tilde r}\sim 1$ 
because of the variable transformation \eqref{psi-redefinition}. 
\begin{figure}[htbp]
 \centering
 \includegraphics[scale=0.7]{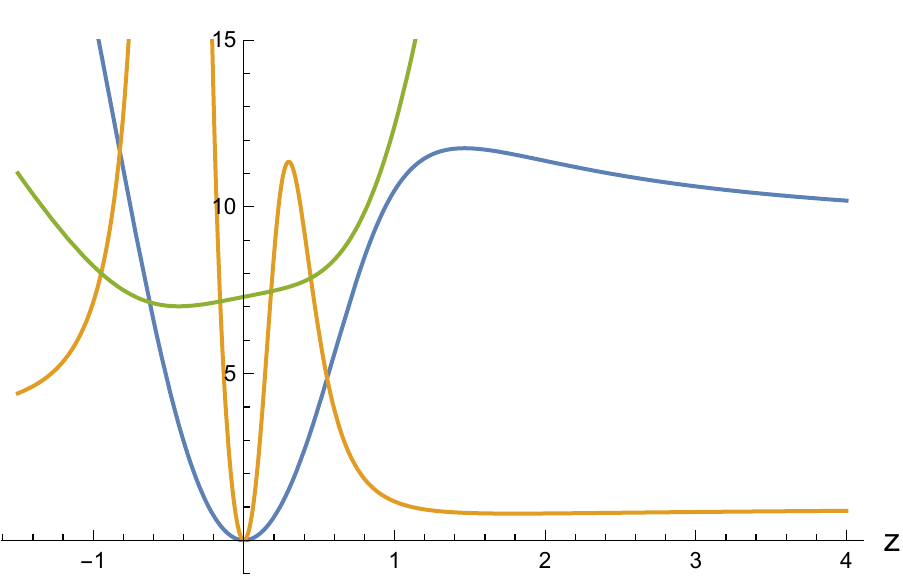}
 \includegraphics[scale=0.7]{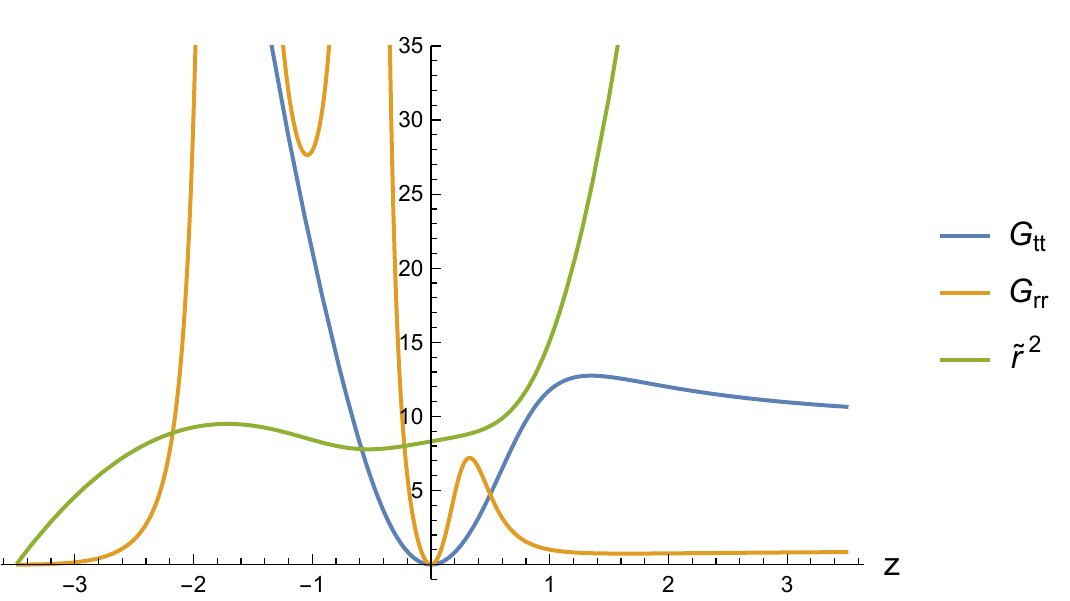}
 \includegraphics[scale=0.7]{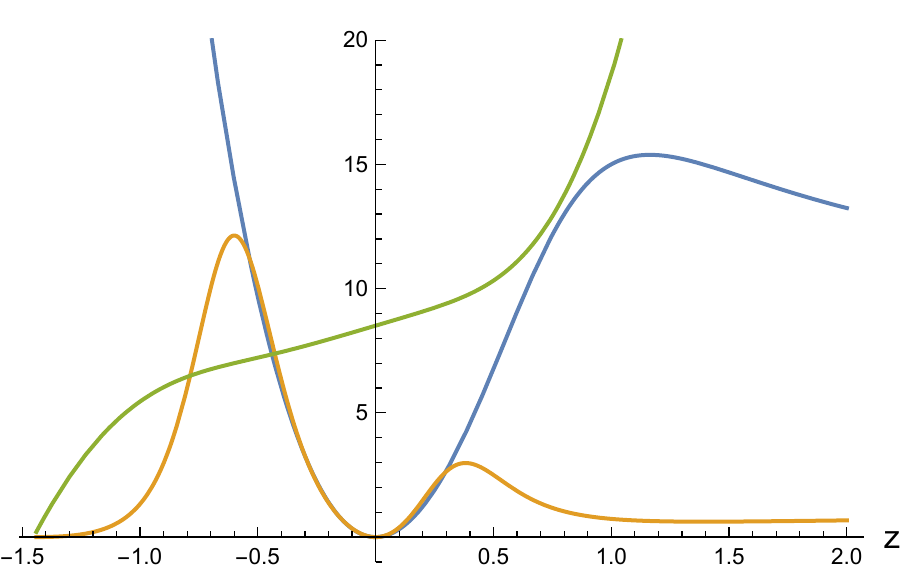}
 \includegraphics[scale=0.7]{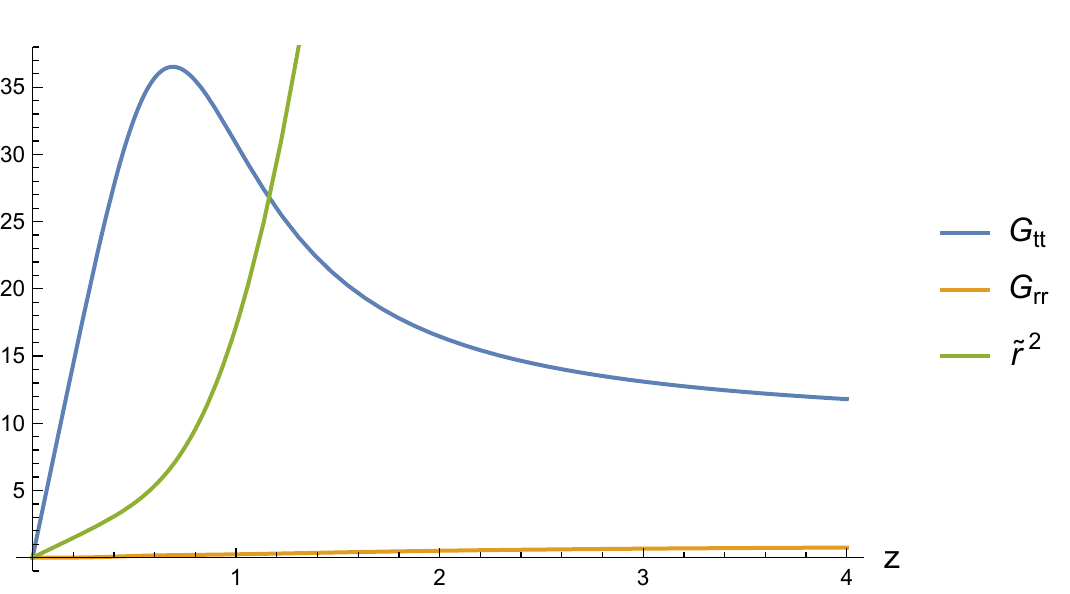}
 \includegraphics[scale=0.7]{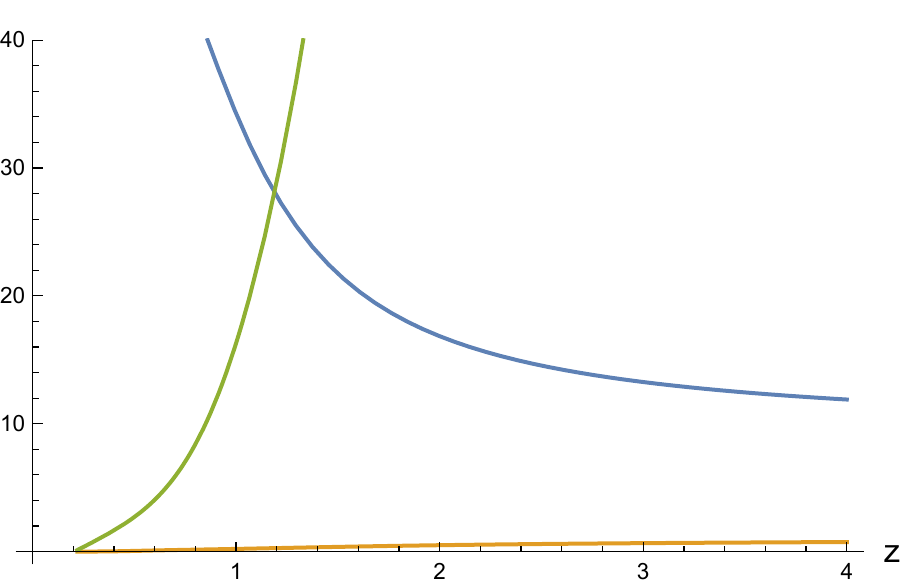}
 \includegraphics[scale=0.7]{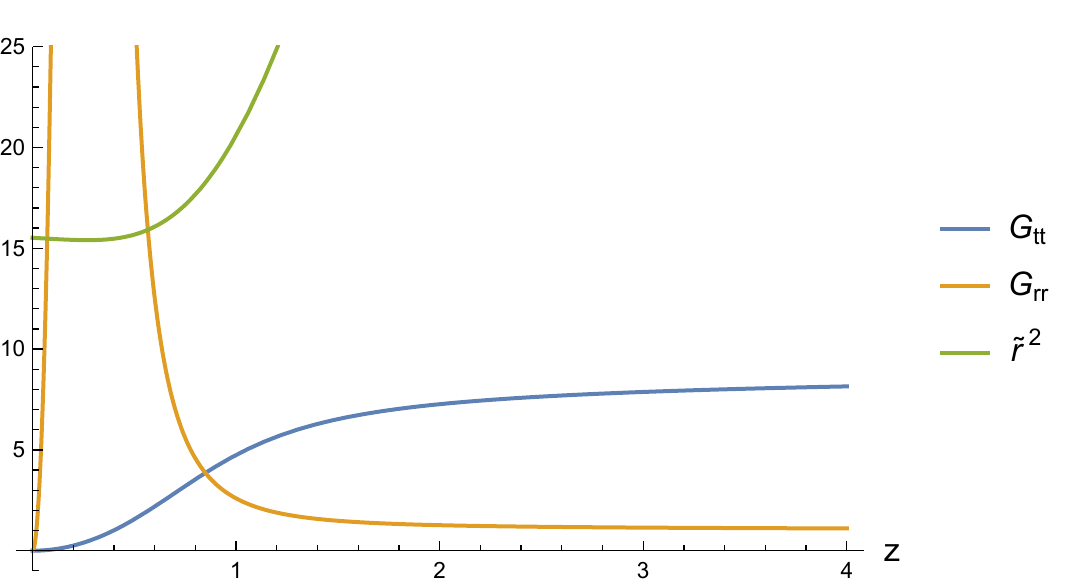}
 \caption{Graphs of the metric $G_{tt}$, $G_{\tilde r\tilde r}$ and $\tilde r^2$. We set $c_3=0.15$, $c_0=0.4$ and $\rho_0=3$. The top-left, top-right, center-left, center-right, bottom-left and bottom-right plots are of $c_1=-6$, $-4.4$, $-2.8$, $-1.47\cdots$, $-1.4$ and $8$, respectively. The center-right graph with $c_1=-1.47\cdots$ is a special case in which $c_4=0$. Note that some of the plots show behavior in $z<0$ though such a region is not physically meaningful.}
 \label{fig:metric_rt2_Eneq0}
\end{figure}

\begin{figure}[htbp]
 \centering
 \includegraphics[scale=0.8]{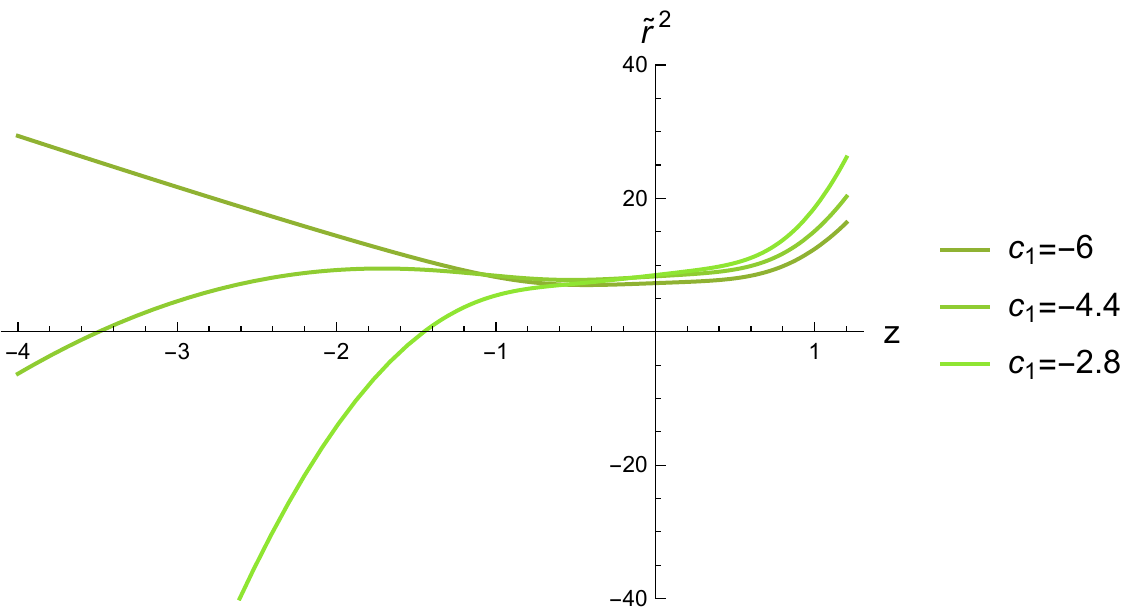}
 \caption{Three types of behaviors of $\tilde r^2$.}
 \label{fig:rt2_Eneq0}
\end{figure}

\paragraph{Asymptotic behavior.}

The long-distance asymptotics of the metric is obtained most easily by recalling that
using $E \sim r^{-3} \sim S$ \eq{E-asymptotics}, \eq{S-asymptotics}. Therefore
\begin{align}
  ds^2_G &=  \r^2\Big(-A^2 dt^2 + B^2 dr^2 
   + 2 r B E dt dr + B^2 r^2 d\Omega^2
  \Big)\ + O(r^{-4}) \ .
\end{align}
By redefining $t\to t+\psi'(r)$ and
using $B = b_0 + O(r^{-4})$ \eq{b-expansion-explicit}, 
we obtain
\begin{align}
  ds^2_G &=  \r^2\Big(-A^2 dt^2 + b_0^2 (dr^2 + r^2 d\Omega^2)\Big)\ + O(r^{-4}) \ .
\end{align}  
Moreover, 
\eqref{A-expand} and \eqref{rho-asymptotics} give
\begin{align}
 \rho^2 A
 &= \sqrt{c_ec_1} + O(r^{-4})
 .
\end{align}
Therefore the metric has the asymptotic form
\begin{align}
 ds^2_G
 &= \sqrt{c_ec_1}\left(
 -A dt^2 + \frac{b_0^2}{A}(dr^2 + r^2 d\Omega^2)
 \right) + O(r^{-4})
 ,
\end{align}
where
\begin{align}
 \frac 1A
 =
 \frac{1}{a_0}
 \Big(1 + \frac {2M}r \Big) + O(r^{-4})
 \label{1/A-metric}
\end{align}
due to \eq{A-expand}, with mass parameter
\begin{align}
 M
 =\frac{c_e}{2|c_e|}\sqrt{\frac{c_e}{c_1}} \frac{a_0 c_0}{b_0}
 \ .
\end{align}
This has the same structure as the simple solution \eq{special-solution-HS} found in \cite{Fredenhagen:2021bnw},
and reproduces the linearized Schwarzschild metric with mass $M$.
Note that we have to choose $c_ec_0>0$
in order to describe a positive mass.

We can compare the metric with the standard Eddington-Robertson-Schiff 
parameters
\begin{align}
 ds_{SS}^2 = -\Big(1-\frac{2M}{r} + 2\b \frac{M^2}{r^2} + ...\Big) dt^2 
 + \Big(1 + 2\g \frac{M}{r}  + ...\Big) (dr^2 + r^2 d\Omega^2)
 ,
\end{align}
which in GR take the values $\b=\g=1$,
while the present solution corresponds to $\g=1$ but $\b=2$.
This means that some of the solar system precision test are not satisfied.
However, this is not surprising since
we have not  taken into account the Einstein-Hilbert term,
which is induced in the quantum effective action at one loop \cite{Steinacker:2021yxt}. 
Since the E-H action has two extra derivatives compared with the bare matrix model 
action\footnote{The E-H action is quadratic in the torsion 
$T^{\dot\a \dot\b \mu} =- \{\Theta^{\dot\a\dot\b},x^\mu\} \sim \theta^{\mu\nu}\del_\nu\Theta^{\dot\a\dot\b}$, while the 
matrix model action is quadratic in $\Theta^{\dot\a\dot\b} = \{Z^{\dot\a},Z^{\dot\b}\}$, cf.~\cite{Steinacker:2021yxt}.},
it is plausible that the induced E-H action will dominate for short distances, while the present solution of the 
classical matrix model should dominate at (very) long distances. Then the above metric should perhaps be 
compared with the metric on galactic scales rather than solar system scales,
and the above deviation from Ricci flatness might be compatible with the observation of 
galactic rotation curves. This will be briefly discussed  in section \ref{sec:rotation-curves}.

\subsection{The $E=0$ branch}
 
For $E=0$, the  effective metric \eq{eff-etric-general-2} takes the form
\begin{align}
  ds^2_G
  &=  \r^2\left( -A^2 dt^2 + b_0^2 dr^2 
  + \left( b_0^2 + \frac{s_0^2}{r^4} \right) r^2 d\Omega^2 \right) \nn\\
  &= \tilde c_\rho\Big(- \frac{A}{\sqrt{b_0^2 + s_0^2 r^{-4}}}  dt^2 
  +  \frac{b_0^2}{A\sqrt{b_0^2 + s_0^2 r^{-4}}}\frac{d\tilde r^2}{(\tilde r')^2}
  + \tilde r^2 d\Omega^2\Big)
  \ ,
  \label{A-E0-metric}
\end{align}
using \eq{B-S-explicit-E0} and \eq{rho-E0}.
Here $A$ is given explicitly by \eq{Aprime-elliptic-2}, and
we introduced again the
effective radial variable $\tilde r$  via
\begin{align}
 \tilde r^2 
 = \frac{\sqrt{b_0^2 r^4 + s_0^2}}{A} \ ,
 \label{r-tilde-def-E0}
\end{align}
which allows to express $\tilde r'$ using \eq{Aprime-elliptic} as follows
\begin{align}
2\tilde r \tilde r' 
 &=  c_{a1}
 + \frac{2b_0^2 r^3 }{A\sqrt{b_0^2 r^4 + s_0^2}} \ .
\end{align}
Some representative plots are shown in Fig.~\ref{fig:metric_rt2_Eeq0}.
Again, we set the asymptotic behavior of the metric
as $A\to 1$, $B\to 1$, and $\rho\to \rho_0=3$.
$\tilde r^2$ is positive and monotonically increasing in the upper graphs
with $c_{a1}=1$, $0.14$,
but it has a minimum in the lower graph with $c_{a1}=-0.5$.
As is the case for $E\ne 0$, $G_{tt}$ diverges as $\tilde r$ approaches zero, 
and $G_{\tilde r\tilde r}$ diverges 
as the first derivative of $\tilde r^2$ approaches zero. 
$G_{tt}$ and $G_{\tilde r\tilde r}$ approach $1$ at large $r$.
\begin{figure}[htbp]
 \centering
 \includegraphics[scale=0.7]{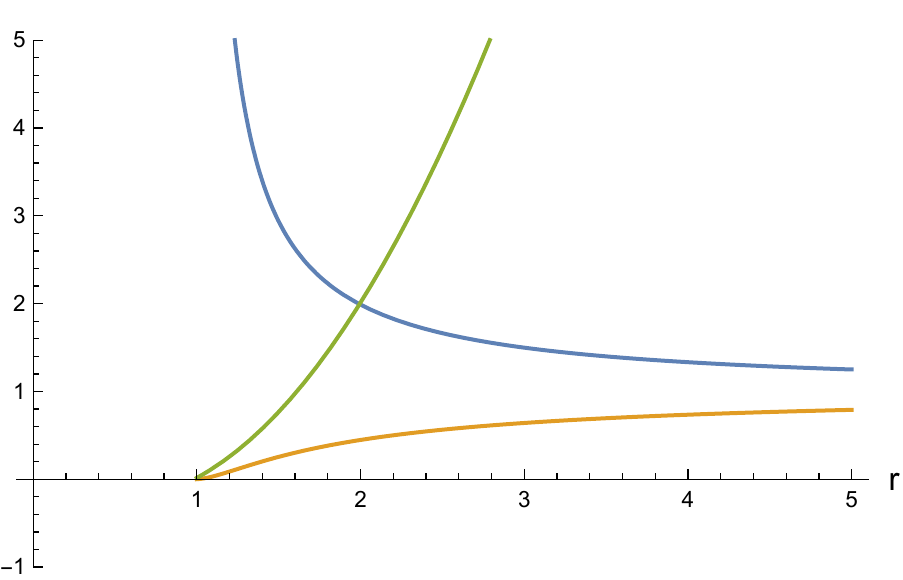}
 \includegraphics[scale=0.7]{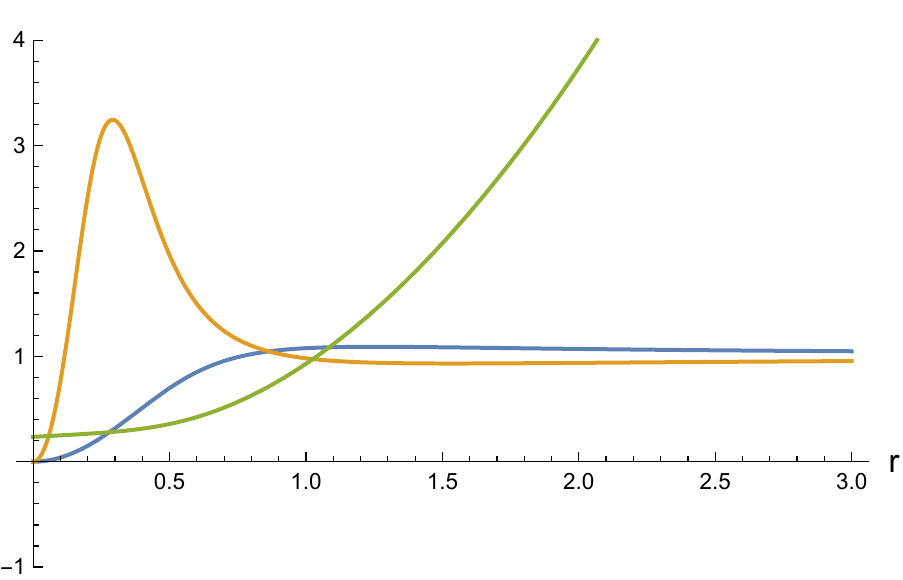}
 \includegraphics[scale=0.7]{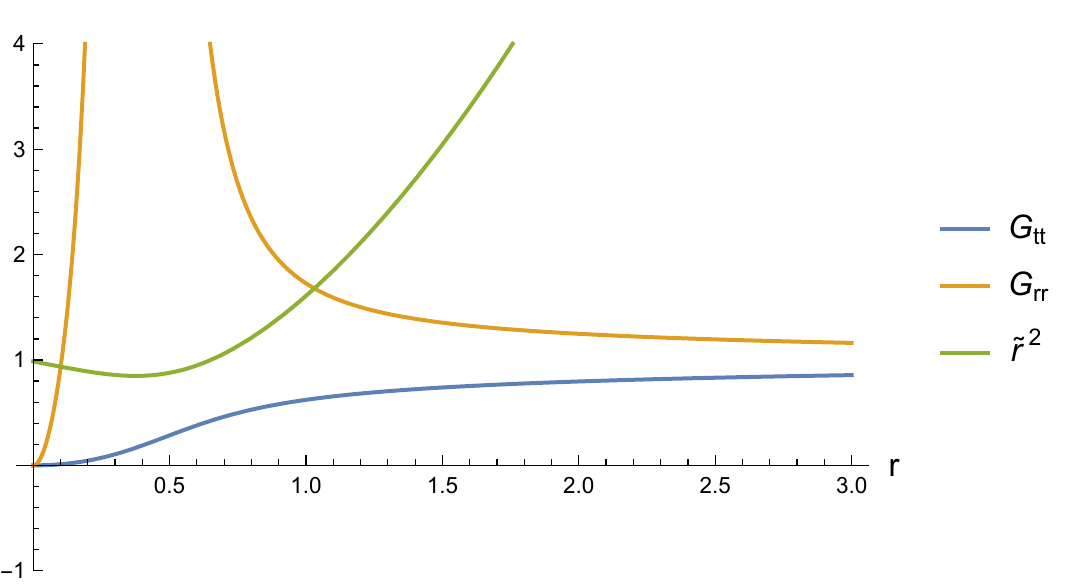}
 \caption{Graphs of the metric $G_{tt}/\tilde c_\rho$, $G_{\tilde r\tilde r}/\tilde c_\rho$ and $\tilde r^2$. We set $s_0=0.4$ and $\rho_0=3$. The top-left, top-right and bottom plots are of $c_{a1}=1$, $0.14$ and $-0.5$, respectively.}
 \label{fig:metric_rt2_Eeq0}
\end{figure}

\paragraph{Asymptotic behavior.}

The long-distance asymptotics of this metric is given by
\begin{align}
 ds^2_G
 &= \r^2A\Big(-A dt^2 + \frac{b_0^2}{A} (dr^2  + r^2 d\Omega^2) \Big)
 + O(r^{-4})  \ .
\end{align}
Using \eqref{special-solution-HS-2} and \eqref{Aprime-elliptic-2} 
we obtain
\begin{align}
 \frac{1}{A}
 &= \tilde c_{a2}\left(
 1+\frac{2M}{r}
 \right)
 + O(r^{-4})
 \ ,
 \nn\\
 \r^2 A &= \frac{\tilde c_\rho}{b_0} \ + O(r^{-4})
 \ ,
\end{align}
where
\begin{align}
 &\tilde c_{a2}=\left\{
 \begin{array}{ll}
  c_{a2}
  & \text{; } s_0=0 \\
  c_{a2}
  +c_{a1}\frac{\Gamma(\frac{1}{4}) \Gamma(\frac{5}{4})}
  {\sqrt{\pi b_0 |s_0|}}
  & \text{; } s_0\neq 0
 \end{array}
 \right.
 \nonumber \\
 &M=-\frac{c_{a1}}{2\tilde c_{a2}b_0}
 \ ,
\end{align}
where the terms of $O(r^{-4})$ is absent if $s_0=0$.
This has again the same structure as the simple solution \eqref{special-solution-HS} \cite{Fredenhagen:2021bnw} 
up to $O(r^{-4})$, 
which deviates from Ricci-flat at the nonlinear level.

\subsection{Rotation curves}
\label{sec:rotation-curves}

We expect that the IR regime of the classical computation is more trustworthy than 
the short-distance regime, where the quantum effects 
(such as an induced Einstein-Hilbert action \cite{Steinacker:2021yxt}) may be important.
It is therefore
interesting to consider the rotation velocities of circular orbits
in these metrics in the large $r$  regime. 
The question is if this may be similar to the 
rotation curves $v(r)$ observed in galaxies, whose well-known 
 deviation from GR is usually attributed to dark matter.

In the non-relativistic linearized
approximation, the effective (Newtonian) gravitational potential $V(r)$ is
given by
\begin{align}
 G_{tt} = -(1+V(r))
 \ .
\end{align}
The equation of motion for a stationary circular orbit is
\begin{align}
 m \frac{v^2(r)}{r} = m V'(r), \qquad v(r) = \sqrt{r V'(r)}
\end{align}
(assuming that $r$ is the effective distance).
For a central mass in Newtonian gravity, the potential 
$V(r) = -\frac{2M}{r}$ leads to 
\begin{align}
 v(r) \sim \sqrt{\frac{2M}{r}}
 \ .
\end{align}
Now consider the present solution. We have seen that in the long-distance regime $r\to\infty$, 
the metric has the universal form 
\begin{align} 
  ds^2_G &\sim -A dt^2 + \frac{1}{A}(dr^2 + r^2 d\Omega^2)\ + O(r^{-4})
\end{align}
up to an overall factor,
where
\begin{align} 
 \frac 1A  = 1 + \frac {2M}r \ + O(r^{-4})
 \ .
 \label{1/A-metric-rot}
\end{align}
Here we set $a_0=b_0=1$.
Then
\begin{align}
 V(r) \sim \frac{1}{1 + \frac {2M}r} - 1 = -\frac{2M}{r + 2M}
 .
 \label{grav-potential-rot}
\end{align}
Then we obtain
\begin{align}
 v(r) = \sqrt{\frac{2Mr}{(r + 2M)^2}} = 
 \sqrt{2M}\frac{\sqrt{r}}{r + 2M}
 .
 \label{rot-curve-formula}
\end{align}
For $r\gg M$, this reproduces the Newtonian $v(r) \sim \sqrt{r^{-1}}$, as it must.
However in the regime $r\sim 2M$, the rotation curve is indeed approximately flat, 
cf.~Fig.~\ref{fig:rot-curves}.
Of course for aspherical rotating objects, the resulting rotation curve would 
look somewhat different, and presumably stretched in the rotation plane.

\begin{figure}
 \centering
 \includegraphics[width=0.4\textwidth]{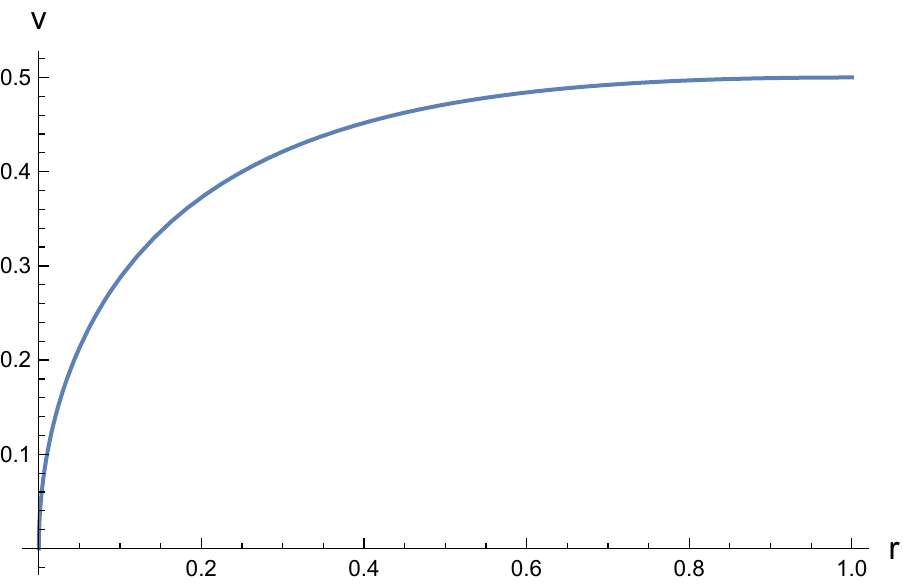}
 \includegraphics[width=0.4\textwidth]{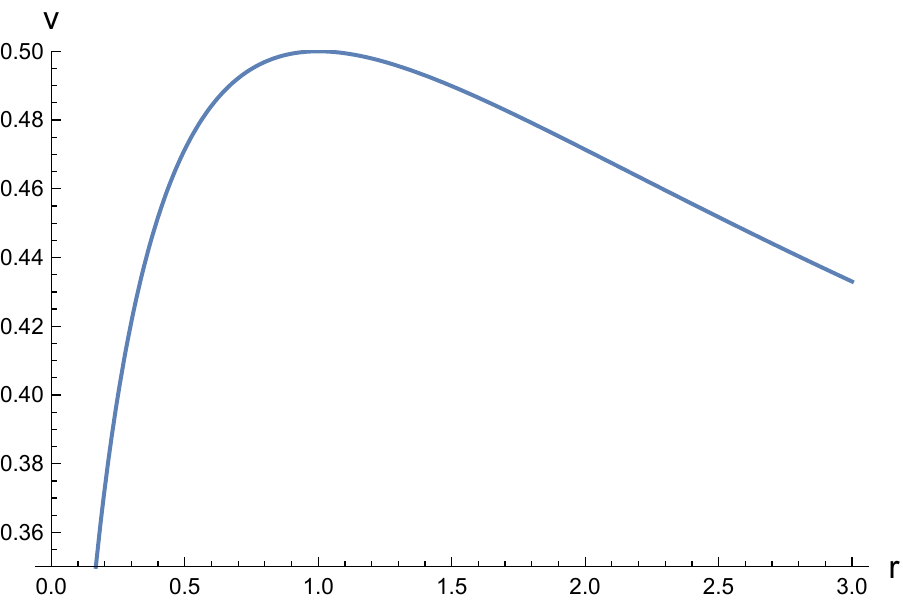}
 \caption{A rotation curve $v(r)$ arising from \eq{rot-curve-formula},
 with different truncation in $r$. $M$ is set to $1/2$.}
 \label{fig:rot-curves}
\end{figure}
Zooming into the appropriate regime, this may indeed
looks like a flat rotation curve. At shorter distances, more pronounced and 
localized feature may arise e.g.~from a non-vanishing axion.

It is amusing to compute the hypothetical mass distribution $M_N(r)$
of ``dark matter'' which would 
result in the same rotation curve in Newtonian gravity. 
This is given by 
\begin{align}
 M_N(r) =  M\frac{r^2}{(r + 2M)^2}
 \ ,
\end{align}
where
\begin{align}
 v(r) = \sqrt{\frac{2M_N(r)}{r}} 
 \ .
\end{align}
This seems not entirely unreasonable.

Note that there is no hidden Newton constant in the potential 
\eq{grav-potential-rot}, and 
there is a priori no reason to expect that $M=G_N m$ where $m$ is the 
physical mass in the center; rather, this long-range ``tail'' of the gravitational 
potential may be related to $m$ in some other, indirect way. 
The maximum of the velocity function \eq{rot-curve-formula} is at
\begin{align}
 r_M = 2M \ ,
\end{align}
where $V(r_M) = -\frac 12$. Hence $r_M$ is in a regime where the metric is significantly different 
from its asymptotic value $V(\infty) = 0$.
This corresponds to the strong gravity regime with associated redshift 
$\frac 12$,
which is completely unrealistic for galaxies. Therefore this 
simple picture does not work. A more complete and perhaps realistic analysis would require incorporating 
the induced Einstein-Hilbert action into the present model. Then 
it is conceivable that a similar effect arises from a cross-over between the 
Einstein-Hilbert regime at short scales and the matrix regime 
at longer distances. However, this would require a more sophisticated analysis. The main point here is that the matrix model framework  admits vacuum geometries which deviate from Ricci-flatness 
at large scales.

\subsection{The effective energy-momentum tensor}

We can now compute explicitly the various contributions to the effective energy-momentum tensor in the effective Einstein equations arising from 
the semi-classical matrix model \cite{Steinacker:2020xph}:
\begin{align}
 \cR_{\mu\nu}  - \frac 12 G_{\mu\nu} \cR  = {\bf T}_{\mu\nu}
 \ .
 \label{Einstein-eq-vac}
\end{align}
Here $\cR_{\mu\nu}$ and $\cR$ are the Ricci tensor and the scalar curvature
of the effective metric $G_{\mu\nu}$, respectively,
and the energy-momentum tensor is
\begin{align}
  {\bf T}_{\mu\nu}  &=  {\bf T}_{\mu\nu}[E^{\dot\a}] 
   + {\bf T}_{\mu\nu}[\r] 
    +  {\bf T}_{\mu\nu}[\tilde\r] 
       - \r^{-2} m^2 G_{\mu\nu}
 \ .
\end{align}
The contributions from the dilaton $\rho$, the axion $\tilde\rho$
and the frame $E^{\dot\alpha}$ are
\begin{align}
 {\bf T}_{\mu\nu}[\r] &= 2\r^{-2}\Big(\del_\mu\r \del_\nu\r - \frac 12 G_{\mu\nu}G^{\sigma\sigma'}\del_\s\r \del_{\s'}\r \Big)
 \ ,
 \label{e-m-tensor-rho}
 \\
 {\bf T}_{\mu\nu}[\tilde\r] 
 &= \frac 12 \r^{-4}\Big(\del_\mu\tilde\r \del_\nu\tilde\r 
 - \frac 12 G_{\mu\nu} G^{\s\s'}\del_\s\tilde\r \del_{\s'}\tilde\r \Big)
 \ ,
 \\
 {\bf T}_{\mu\nu}[E^{\dot\a}] 
 &= \rho^{2}\Big(
 -\tensor{T}{_\mu_\r^\s}\tensor{T}{_\nu_{\rho'}_\s}G^{\rho\rho'}
 + \frac 14 G_{\mu\nu} \tensor{T}{_\rho_\sigma^\kappa}\tensor{T}{_{\rho'}_{\sigma'}_\kappa} G^{\rho\rho'}G^{\sigma\sigma'}
 \Big)
 \ .
 \label{e-m-tensor-frame}
\end{align}

\paragraph{Axion.}

Consider first the axion, which is non-vanishing only 
for $E\neq0$.
Under the static-axion condition \eq{axion-static},
the axion \eqref{axion-static-Tr}
becomes
\begin{align}
 \tilde T_r 
 = \frac{4 r \r^4 |A| E S}{c_e c_1} \frac{1}{|B|}
 = \sqrt{-\frac{c_2}{c_e^3}}  \frac{4r^4 \r^4  E^2 S}{c_1 |B|^{2 c_3+1}}
 \ ,
\end{align}
using the integral of motion \eq{AE-1}
and then \eq{E2-B2-A2-relation}.
Therefore,
substituting $S$ with \eqref{r2S2-eq},
we have
\begin{align}
  \tilde T_r = \r^{-2}\del_r \tilde\r
 &= \pm 4c_e\sqrt{-c_1c_2c_3}
   \frac{ r^3 |B|^{2 (1-c_3)}}{(c_3 c_e+c_1\, r^4 B^4)^2} \ \sim r^{-5}
 \ ,
 \label{axion-radial}
\end{align}
where the sign is plus if $S>0$ and minus if $S<0$.
This could become singular if 
\begin{align}
 -c_3 c_e = c_1 r^4 B^4
 ,
\end{align}
which may happen for $c_3=0$
and is associated 
with a singularity of $(\ln B)'$ (see \eq{dB-equation}); 
however, this singularity does not appear because $\tilde T_r$ is zero for $c_3=0$.

\paragraph{Dilaton.}
For $E\ne 0$, we have 
\begin{align}
 (\ln \r)'
 &= \frac 1r - \frac{1}{c_e}\frac{c_1 r^4 B^4 - c_e}{c_1 r^2 B^2} E \rho^2
  - \frac{2}{c_1 c_e} r (E \rho^2)^2
  - \frac{c_0}{2} r B E \ ,
\end{align}
using \eqref{eom-1-rho} and \eqref{AE-1}.
We then can consider the asymptotic expansion
\begin{align}
 (\ln \r)' &\sim \frac 1r - \frac{b_0^2}{c_e}  r^2 E \rho^2 
  - \frac{2}{c_1 c_e} r (E \rho^2)^2
  - \frac{c_0 b_0}{2} r E \nn\\
  &= \Big(1 - \frac{b_0^2}{c_e} (r^3 E\rho^2)\Big) \frac 1r
  - \frac{c_0 b_0}{2} (r^3E) \frac{1}{r^2}
  - \frac{2}{c_1 c_e} (r^3E)^2 \rho^4 \frac{1}{r^{5}} \nn\\
  &= -\frac{c_0 c_e}{2b_0} \r^{-2} \frac{1}{r^2}  + O(r^{-5}) 
 \ ,
\end{align}
using
\begin{align}
 \frac {b_0^2}{c_e} r^3 E \rho^2 \sim 1 + O(r^{-4})
 \ ,
\end{align}
which can be obtained from \eqref{E-asymptotics} and \eqref{rho-asymptotics}.
But this leads to
\begin{align}
 \cR = -G^{\mu\nu}({\bf T}_{\mu\nu}[\r] + {\bf T}_{\mu\nu}[\tilde\r])
 \sim r^{-4} \neq 0
 \ .
\end{align}
The $\r$ contribution clearly dominates and cannot be removed.





\paragraph{Metric and the energy-momentum tensors.}
The graphical relationship between the metric, $\tilde r$ and the traces of the energy-momentum tensors
is shown in Fig.~\ref{fig:EMT_metric_rt2_Eneq0} for $E\ne 0$
and in Fig.~\ref{fig:EMT_metric_rt2_Eeq0} for $E=0$ (with $S\ne 0$).
One can see that the traces of the energy-momentum tensor of the dilaton have poles.
Interestingly, in both cases $E\ne 0$ and $E=0$,
the scalar curvature does not diverge at 
the points where $G_{\tilde r \tilde r}$ diverges.
Moreover, we observe the scalar curvature diverges 
if $G_{tt}$ is infinity or $r$ is zero.
\begin{figure}[htbp]
 \centering
 \includegraphics[scale=0.67]{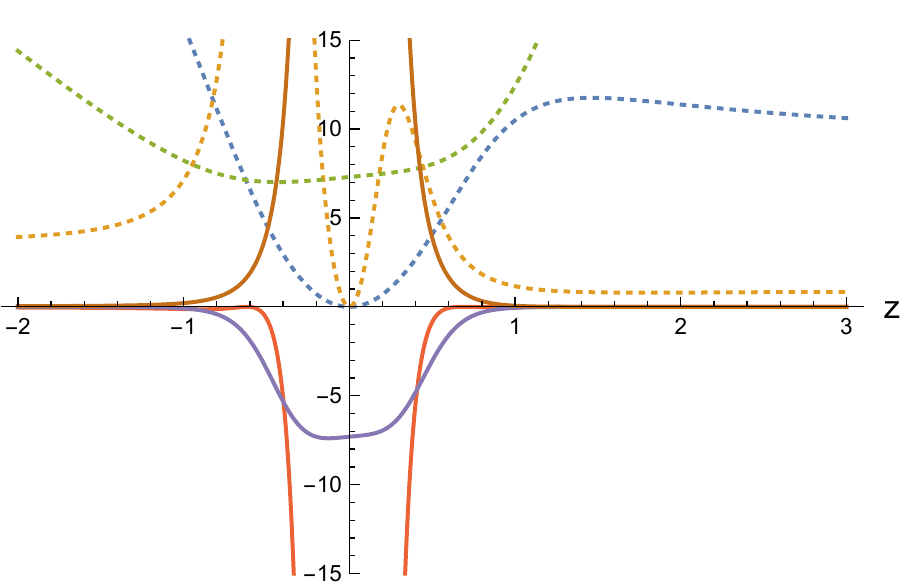}
 \includegraphics[scale=0.67]{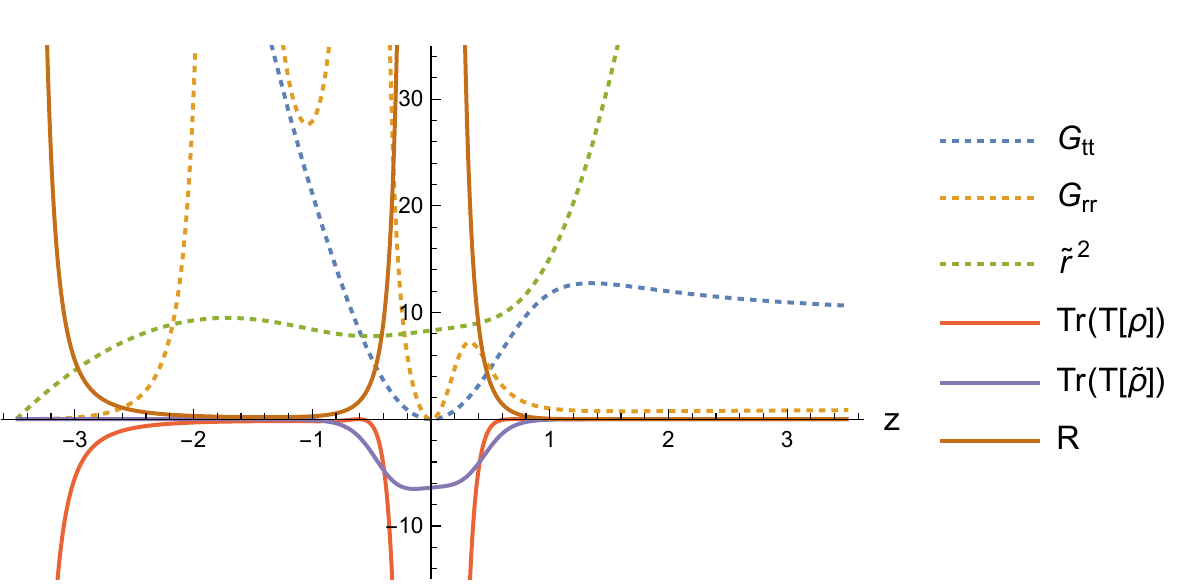}
 \includegraphics[scale=0.67]{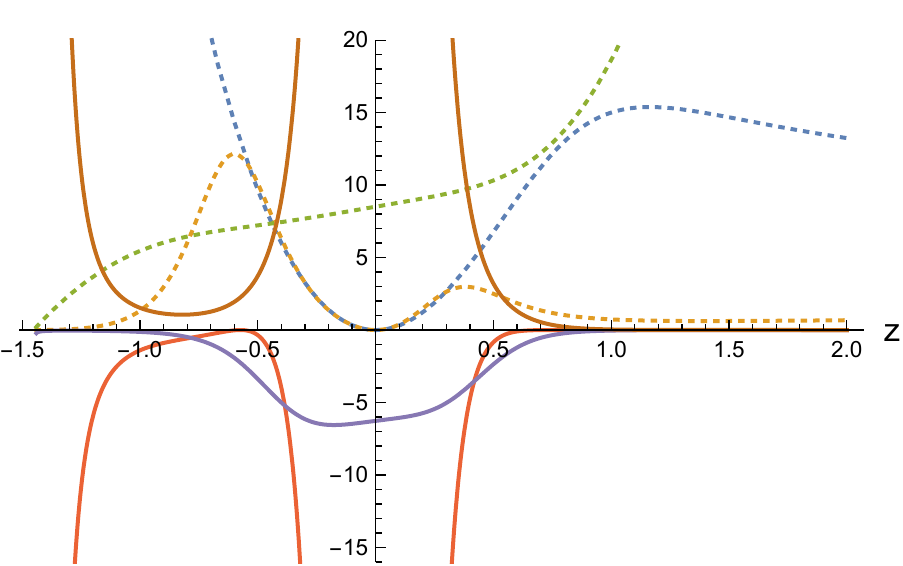}
 \includegraphics[scale=0.67]{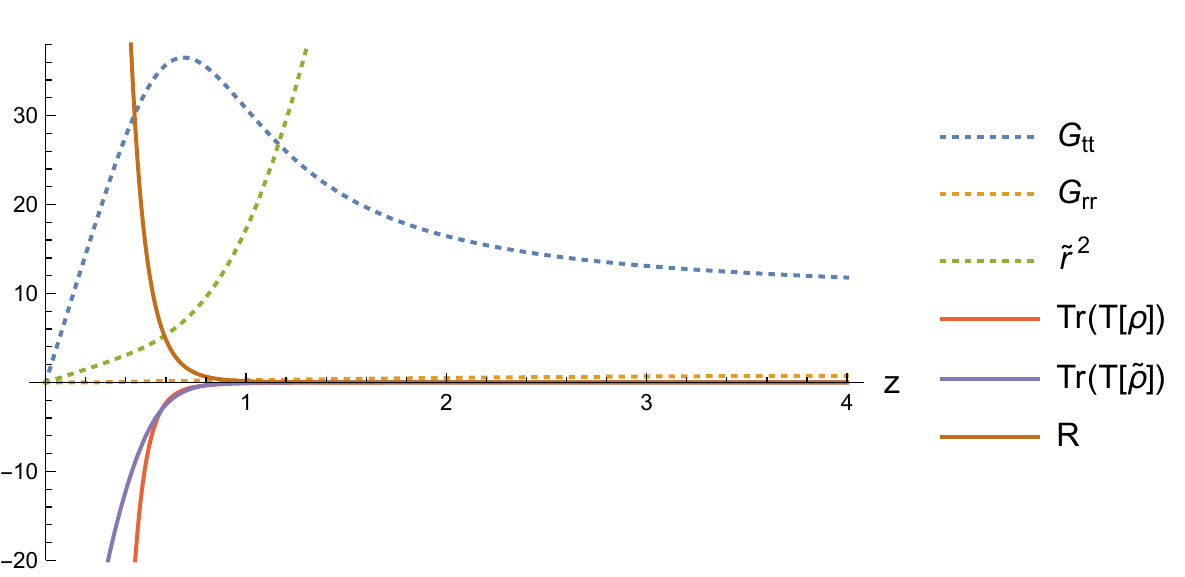}
 \includegraphics[scale=0.67]{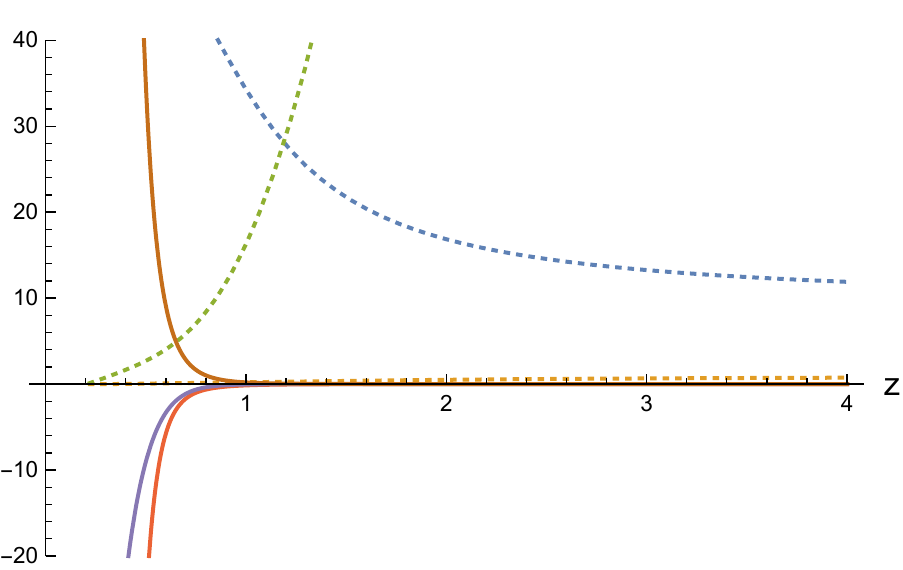}
 \includegraphics[scale=0.67]{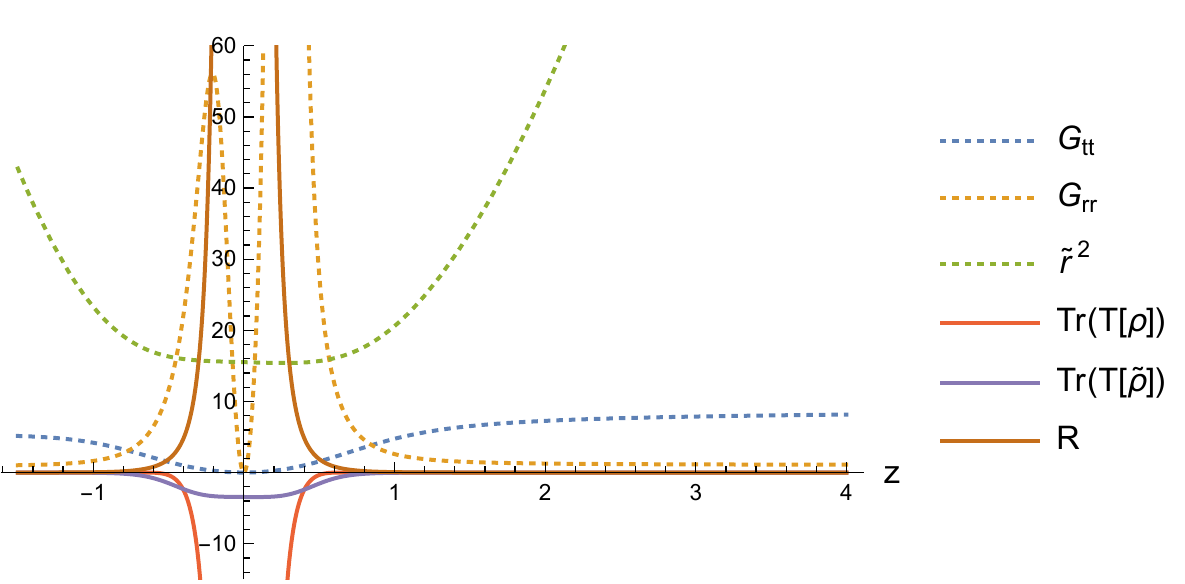}
 \caption{Graphs of the traces of the energy-momentum tensors and the scalar curvature with the metric $G_{tt}$, $G_{\tilde r\tilde r}$ and $\tilde r^2$
 for $E\neq 0$. 
 We set $c_3=0.15$, $c_0=0.4$ and $\rho_0=3$. The top-left, top-right, center-left, center-right, bottom-left and bottom-right plots are of $c_1=-6$, $-4.4$, $-2.8$, $-1.47\cdots$, $-1.4$ and $8$, respectively.
 Note that they correspond to the ones in Fig.~\ref{fig:metric_rt2_Eneq0}. In contrast to Fig.~\ref{fig:metric_rt2_Eneq0}, the metric and $\tilde r^2$ are plotted by dashed curves here.}
 \label{fig:EMT_metric_rt2_Eneq0}
\end{figure}

\begin{figure}[htbp]
 \centering
 \includegraphics[scale=0.67]{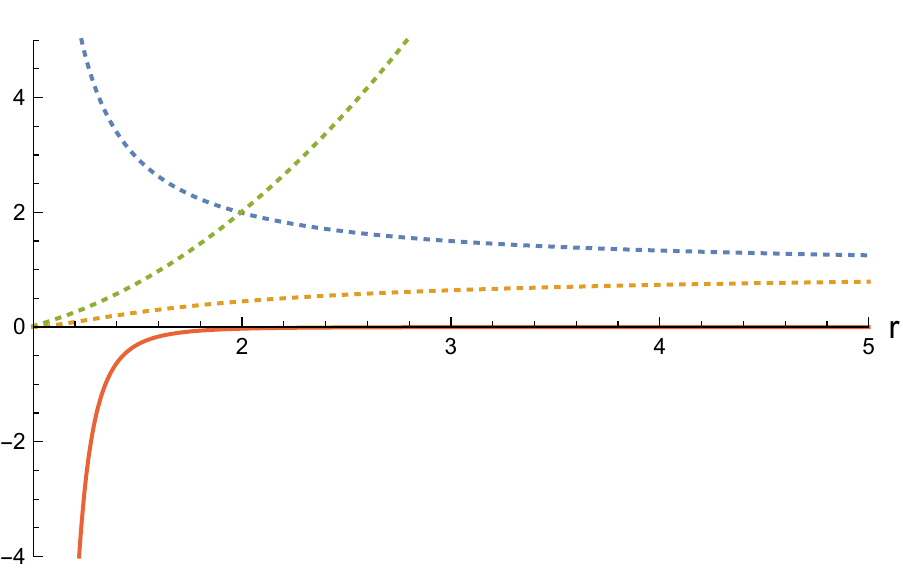}
 \includegraphics[scale=0.67]{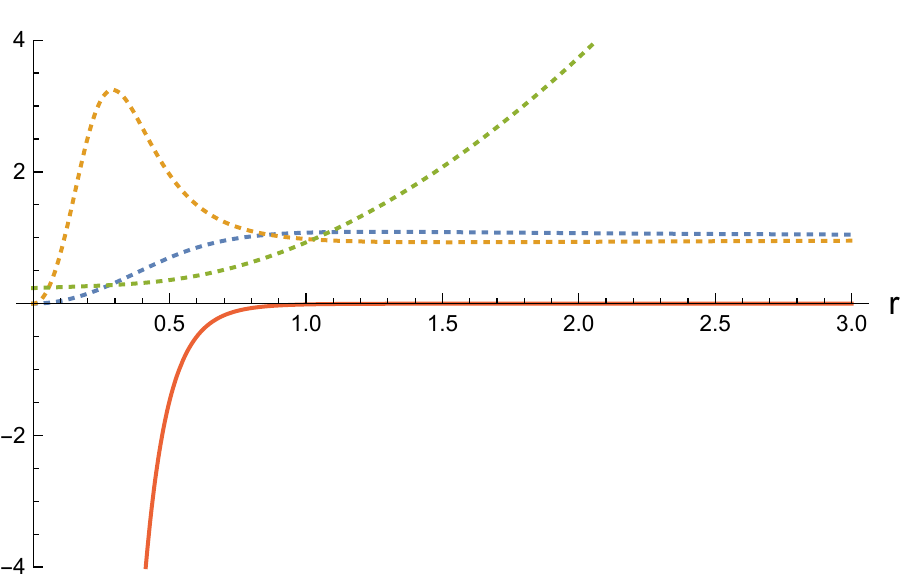}
 \includegraphics[scale=0.67]{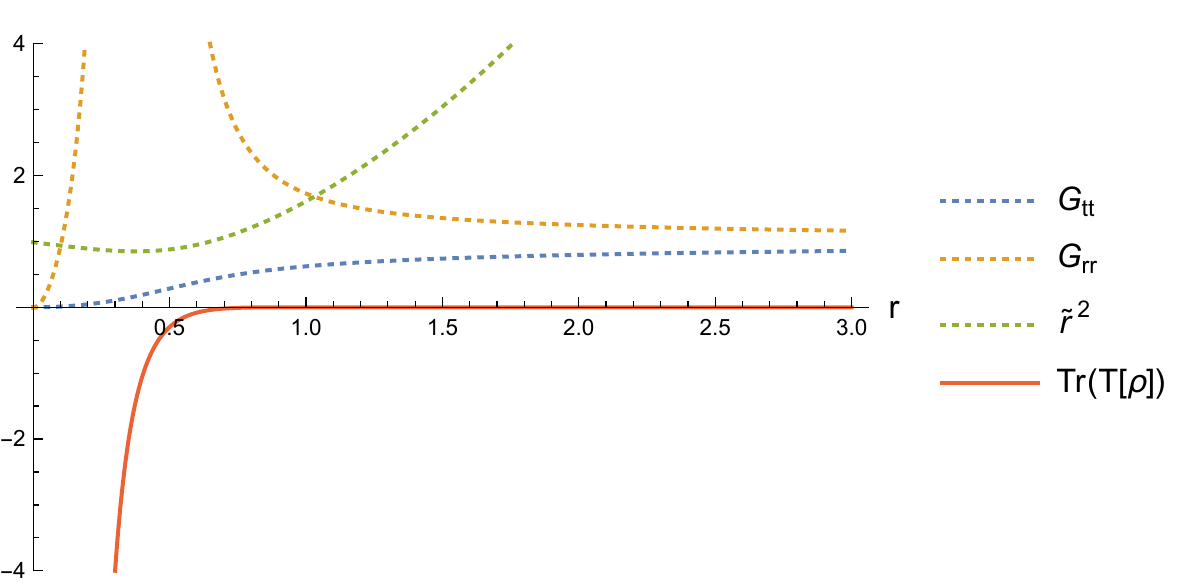}
 \caption{Graphs of the trace of the dilaton's energy-momentum tensor with the metric $G_{tt}/\tilde c_\rho$, $G_{\tilde r\tilde r}/\tilde c_\rho$ and $\tilde r^2$
 for $E= 0$. 
 We set $s_0=0.4$ and $\rho_0=3$. The top-left, top-right, bottom-left and bottom-right plots are of $c_{a1}=1$, $0.14$, $-0.5$, respectively. 
 The scalar curvature is not plotted because $\mathcal{R}=-\Tr(\mathbf{T}[\rho])$. Note also that these graphs correspond to the ones in Fig.~\ref{fig:metric_rt2_Eeq0}. Again, in contrast to Fig.~\ref{fig:metric_rt2_Eeq0}, the metric and $\tilde r^2$ are plotted by dashed curves here.} 
 \label{fig:EMT_metric_rt2_Eeq0}
\end{figure}

\section{Discussion and outlook}
We derived and classified 
general $SO(3)$-symmetric static solutions to the matrix-model equations of motion 
written in terms of the frame of $\mathcal{M}^{3,1}$.
We identify two different classes of solutions: one without axion for $E=0$ 
in our ansatz, and one including an axion for $E\neq 0$.
We imposed a natural condition \eqref{axion-static}, 
in which the axion is also static.
The behavior of the solutions has several remarkable features.

Firstly, the asymptotic behavior of the general solutions is 
the same as the previously found special solution \eqref{special-solution-HS},
for sufficiently large $r$.
This suggests that the special solution well represents 
the general solutions to the equations of motion, which include further contributions 
due to the axion and dilaton at shorter distances.

Secondly in the $E\ne 0$ case, 
it is remarkable that there is a parameter region 
where the effective metric is double-valued in the effective radius squared $\tilde r^2$.
More precisely, in such a solution
there are two values of $z$ corresponding to a value of $\tilde r^2$,
so that there are two corresponding values of the other elements of the metric.
The metric has a singularity at the minimal $\tilde r^2$.
This can be viewed as a wormhole-like solution.
Since there is no symmetry of flipping $z$ around the minimum of $\tilde r^2$,
it turned out the behaviors of the two sides of the metric are entirely different in general.


Finally, we found that
the Schwarzschild solution is not contained in the general solutions,
though the asymptotic behavior of the general metric reproduces the 
linearized Schwarzschild metric.
Although this sounds like an unwanted result,
we can expect to recover such a solution upon taking into account the Einstein-Hilbert term, which is 
induced by one-loop effects \cite{Steinacker:2021yxt}.
In particular, it is interesting that the origin of the (asymptotic) mass of the 
present solutions is not some singular matter localized at the center, 
but a new type of ``vacuum energy'' due to dilaton, axion, and 
the noncommutative frame itself.
Therefore the solutions  obtained in this paper should be understood
as solutions to the pre-gravity theory arising on classical brane solutions, and their full significance
will only be understood upon taking into account the induced Einstein-Hilbert term.
The results and techniques developed here should allow 
to study also  solutions of such a combined action. This is certainly the most important open problem,
which is postponed for future work.

There are clearly many further directions for follow-up work.
For example, we studied only static solutions in this paper.
Since the classical solutions obtained without the Einstein-Hilbert term
are expected to describe spacetime in the cosmological regime,
it will be intriguing to investigate more general solutions with cosmic time evolution.
Establishing a 
relation of such cosmic solutions to numerical attempts to realise cosmic time evolution from the matrix model, e.g.~\cite{Kim:2011cr,Nishimura:2019qal},
would significantly improve our understanding of the matrix model and the universe.
On a more technical level, we have provided a partial answer to the problem of reconstructing matrix configurations 
for a given frame.
A more complete treatment of this problem and of possible higher-spin contributions should be given elsewhere.

\section*{Acknowledgment}
Y.A.~thanks Katsuta Sakai for useful discussions.
This work was supported by the Austrian Science Fund (FWF) grant P32086.
The Mathematica package Riemannian Geometry \& Tensor Calculus (RGTC), coded by Sotirios Bonanos,
was of great help in checking the solutions.

\appendix

\section{Reconstruction of the frame}

In this paper, we have solved the equations of motion for the frame 
$\tensor{E}{_{\dot\a}^\mu}$. However, it remains to be shown that 
such frames can indeed arise as configurations in the semi-classical matrix model, i.e.~via
Poisson brackets \eq{frame-Poisson}
\begin{align}
  \{Z_{\dot\a},x^\mu\} = \tensor{E}{_{\dot\a}^\mu} \qquad  \in\cC^0
  \label{frame-reconstruction}
\end{align}
in terms of the basic matrices $Z_{\dot\a}$. For the solution found in \cite{Fredenhagen:2021bnw},
that question was settled directly by constructing suitable $Z_{\dot\a}$.
This was possible, but required an infinite tower of higher-spin contributions to $Z_{\dot\a}$.

Here we  want to address this  question more generally: Given any (spin 0 valued)
frame $\tensor{E}{_{\dot\a}^\mu}\in\cC^0$ 
which satisfies the divergence constraint  $\del_\nu\big(\r_M\tensor{E}{^\nu}\big) = 0$ 
\eq{frame-div-free}, are there always generators (``potentials'') $Z_{\dot\a}$
such that \eq{frame-reconstruction} holds?

\vspace{0.2cm}

We can provide a partial answer to this question: 
For any divergence-free frame $\tensor{E}{_{\dot\a}^\mu}\in\cC^0$, we can construct 
generators $Z_{\dot\a} \in \cC^1$ such that  $ [\{Z_{\dot\a},x^\mu\}]_0 = \tensor{E}{_{\dot\a}^\mu}$ for the projection $[.]_0$ on $\cC^0$.
However, we cannot settle the question if this equation can  be satisfied 
for all higher-spin components for suitable $Z_{\dot\a}$;
this  is postponed to future work.
Moreover we restrict ourselves to the asymptotic regime, where the wavelengths 
are much smaller than the cosmic scale. 
To show this (partial) result, we first establish some results for the 
fuzzy hyperboloid $H^4_n$ which is underlying $\cM^{3,1}$.

\subsection{Divergence-free vector fields on $H^4$ and reconstruction}
\label{sec:reconstruction}

First we recall some results for the fuzzy hyperboloid  $H_n^4 \subset \R^{4,1}$
(see (4.9) and (9.9) in \cite{Steinacker:2019awe}):
\begin{lem}
Given a divergence-free tangential vector field $\phi_{a}\in\cC^0$ on $H^4$,
we have
\begin{align}
 - \{x_a,\phi^{(1)}\}_{0} &=  \frac{1}{3}  (\Box_H - 2 r^2) \phi_a \ , \qquad \mbox{where} \ \
 \phi^{(1)} = \{x^a,\phi_a\} \ 
 \label{x-phi-comm-spins}
\end{align}
where $x^a$ are  Cartesian coordinates on  $\R^{4,1}$.
 \label{lemma-alpha}
 \end{lem}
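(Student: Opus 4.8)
The plan is to work in the semiclassical limit, where $H^4_n$ becomes the Poisson manifold underlying $\cM^{3,1}$ and the function algebra $\cC=\bigoplus_s\cC^s$ is graded by the spin $s$ of the internal $S^2$ fiber, with $\{.,.\}$ the bracket induced by the $\mso(4,2)$ generators. The structural input from \cite{Steinacker:2019awe} that I would take for granted is the radial constraint $x_a x^a=-r^2$, the tangentiality $x^a\phi_a=0$, and the fact that $\{x^a,\cdot\}$ shifts the spin by one. The divergence-free condition $\del_a(\r\,\phi^a)=0$ is exactly the vanishing of the spin-$0$ part of $\{x^a,\phi_a\}$, so that $\phi^{(1)}=\{x^a,\phi_a\}$ is purely spin-$1$, i.e.\ $\phi^{(1)}\in\cC^1$. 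The conceptual backbone is then an equivariance argument: $[\{x_a,\phi^{(1)}\}]_0$ is a spin-$0$ vector field depending linearly and $SO(4,1)$-covariantly on $\phi$, and since $x_a(x\cdot\phi)=0$ the only such object is a rotation-invariant scalar operator applied to $\phi_a$. Hence $-[\{x_a,\phi^{(1)}\}]_0=(\alpha\,\Box_H+\beta\,r^2)\phi_a$, and the problem reduces to fixing the two constants $\alpha,\beta$.

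To expose $\Box_H$, I would apply the Jacobi identity,
\begin{align}
 \{x_a,\{x^b,\phi_b\}\}=\{\{x_a,x^b\},\phi_b\}+\{x^b,\{x_a,\phi_b\}\},
\end{align}
and rewrite the double bracket $\{x^b,\{x_a,\phi_b\}\}$ in terms of the contracted scalar operator $\{x^b,\{x_b,\phi_a\}\}$, which equals $\Box_H\phi_a$ on $\cC^0$ by (4.9) of \cite{Steinacker:2019awe}. The commutators generated in passing from one index ordering to the other are handled with the $\mso(4,2)$ relations and the radial constraint, while the Poisson-tensor term $\{\{x_a,x^b\},\phi_b\}=\{\theta_a{}^b,\phi_b\}$ and the antisymmetric (rotation) piece of $\{x_a,\phi_b\}$ are precisely the contributions that are \emph{not} proportional to $\phi_a$; these are eliminated using the divergence-free condition and the projection $[.]_0$ onto $\cC^0$, which also annihilates the spin-$2$ (symmetric-traceless) remainder.

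What remains is to pin down $\alpha$ and $\beta$, and this is where I expect the real work to lie. The value $\alpha=\tfrac{1}{3}$ is the normalization $1/(2s+1)$ with $s=1$ of the spin-$1$ projector implicit in raising $\phi_a$ to $\phi^{(1)}$ and lowering it back to $\cC^0$, which I would extract by contracting the intermediate relations against $x^a$ and normalizing the fiber harmonics with $x_a x^a=-r^2$, exactly as encoded in (9.9) of \cite{Steinacker:2019awe}. The constant $\beta=-\tfrac{2}{3}$, i.e.\ the shift $-2r^2$, is the difference of the quadratic-Casimir eigenvalues on the scalar and vector harmonics of $H^4_n$; it is produced whenever $x^b$ is commuted past $x_a$ and the radial constraint is re-used, each such step contributing a multiple of $r^2\phi_a$. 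Assembling the trace contribution with these constants gives $-\{x_a,\phi^{(1)}\}_0=\tfrac{1}{3}(\Box_H-2r^2)\phi_a$. The main obstacle is the careful bookkeeping of the two competing gradings — the ambient vector index and the fiber spin — under $[.]_0$: one must verify that the antisymmetric and spin-$2$ pieces drop out cleanly and that no further constant multiple of $r^2\phi_a$ is overlooked, for which the explicit harmonic identities of \cite{Steinacker:2019awe} are indispensable.
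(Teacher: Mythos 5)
The first thing to say is that the paper itself does not prove this lemma: it is imported verbatim from equations (4.9) and (9.9) of \cite{Steinacker:2019awe}, so there is no internal argument to measure your attempt against, only the citation. Judged on its own terms, your outline has the right skeleton --- the Jacobi identity to relate $\{x_a,\{x^b,\phi_b\}\}$ to the contracted operator $\{x^b,\{x_b,\phi_a\}\}\propto\Box_H\phi_a$ plus a $\{\theta_a{}^b,\phi_b\}$ term, and an equivariance argument reducing the answer to $(\alpha\,\Box_H+\beta\,r^2)\phi_a$ --- but it has a genuine gap exactly where the content of the lemma lies. The structural form is essentially free by covariance; the lemma \emph{is} the pair of coefficients $\alpha=\tfrac13$, $\beta=-\tfrac23$, and you never compute them. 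You offer the heuristics ``$1/(2s+1)$ with $s=1$'' and ``difference of Casimir eigenvalues,'' and then defer the verification to ``the explicit harmonic identities of \cite{Steinacker:2019awe}'' --- i.e.\ to the very source from which the lemma is being quoted. In particular, reordering $\{x^b,\{x_a,\phi_b\}\}$ into $\{x^b,\{x_b,\phi_a\}\}$ requires controlling the full decomposition of $\{x_a,\phi_b\}$ into trace, antisymmetric and symmetric-traceless parts, together with how each piece survives the projection $[\,\cdot\,]_0$; asserting that the unwanted pieces ``drop out cleanly'' is precisely the step that needs the explicit $\mso(4,2)$ computation, and without it no constant multiple of $r^2\phi_a$ is under control.

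There is also one concrete misstatement. The divergence-free condition is \emph{not} the vanishing of the spin-0 part of $\{x^a,\phi_a\}$: for $\phi_a\in\cC^0$ one has $\{x^a,\phi_a\}=\theta^{ab}\eth_b\phi_a$, and since $\theta^{ab}$ on $H^4_n$ is purely spin 1 (its fiber average vanishes), the $\cC^0$ component of $\{x^a,\phi_a\}$ vanishes identically, whether or not $\eth_a\phi^a=0$. So $\phi^{(1)}\in\cC^1$ holds for any tangential $\phi_a\in\cC^0$. The hypothesis $\eth_a\phi^a=0$ is needed elsewhere, namely to kill the $\eth_b\phi^b$-type terms generated when derivatives are commuted in the Jacobi-identity manipulation. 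As written, your argument invokes the hypothesis where it is vacuous and risks overlooking the place where it actually does work.
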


\begin{lem}
\label{lemma:box-intertwine-VF}
\begin{align}
  (\Box_H  + 2 r^2 s)\{\phi^{(s)},x_a\}_- &= \{\Box_H\phi^{(s)},x_a\}_-
 \label{Box-H-intertwiner-minus} \\
 (\Box_H  - 2 r^2 (s+1))\{\phi^{(s)},x_a\}_+ &=\{\Box_H\phi^{(s)},x_a\}_+ \ 
 \label{Box-H-intertwiner-plus}
\end{align}
for any $\phi^{(s)} \in \cC^s$. Here $[]_\pm$ denotes the projection on $\cC^{s\pm 1}$.
 
\end{lem}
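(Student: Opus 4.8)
The plan is to derive both identities from a single computation of the commutator of $\Box_H$ with the spin-shifting map $\phi\mapsto\{\phi,x_a\}$, and then to resolve the outcome into its spin-lowering and spin-raising components. First I would use the double-bracket presentation of the Laplacian on $H^4$, writing $\Box_H\phi = \{x^b,\{x_b,\phi\}\}$, so that $\Box_H$ is built from the very operation $D_b:=\{\,\cdot\,,x_b\}$ appearing in the statement; up to the sign conventions of \cite{Steinacker:2019awe} this reads $\Box_H = D^bD_b$. The whole lemma then reduces to controlling the failure of $D_a$ to commute with $D^bD_b$.

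The essential input is the Jacobi identity, applied first once and then again. A single application yields
\begin{align}
 [D_a,D_b]\phi = \{\{x_a,x_b\},\phi\} \ ,
\end{align}
so the commutator of two spin-shifting operators is the Poisson action of $\{x_a,x_b\}$, which on $H^4_n$ is proportional to the $\mso$ rotation generator $\cM_{ab}$. Commuting $D_a$ through $\Box_H=D^bD_b$ then gives
\begin{align}
 \Box_H D_a - D_a\Box_H = [D^b,D_a]D_b + D^b[D_b,D_a] \ ,
\end{align}
whose right-hand side is expressed entirely through the rotation generators $\{x_a,x_b\}$ acting by Poisson bracket, together with the radial constraint $x_ax^a=-r^2$ that defines the hyperboloid. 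Rearranging, and using that $\Box_H$ is $\mso$-scalar and therefore commutes with the projections $[\,\cdot\,]_\pm$, I obtain
\begin{align}
 \{\Box_H\phi^{(s)},x_a\}_\pm = \Box_H\{\phi^{(s)},x_a\}_\pm - \big[\big(\Box_H D_a - D_a\Box_H\big)\phi^{(s)}\big]_\pm \ .
\end{align}
It thus remains to evaluate the rotation-generator correction term on each of the two sectors.

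The final step is therefore to show that this correction operator is diagonal on the decomposition $\{\phi^{(s)},x_a\}=\{\phi^{(s)},x_a\}_-+\{\phi^{(s)},x_a\}_+$, acting as the scalar $-2r^2 s$ on the lowering part $\cC^{s-1}$ and as $2r^2(s+1)$ on the raising part $\cC^{s+1}$; feeding these eigenvalues into the displayed relation reproduces \eqref{Box-H-intertwiner-minus} and \eqref{Box-H-intertwiner-plus} respectively. The asymmetry of the two shifts is natural, since the lowering piece removes one unit of spin and the raising piece adds one, so the spin operator hidden in the correction is evaluated at $s$ versus $s+1$.

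The step I expect to be the main obstacle is precisely pinning down these two eigenvalues. This requires the detailed Poisson algebra of the $x_a$ and $\cM_{ab}$ on the fuzzy hyperboloid, the exact eigenvalue of the relevant combination of rotation generators on each $\cC^s$, and careful bookkeeping of the radial constraint, which is what supplies the overall $r^2$ factors. All of these ingredients are recorded in (4.9) and (9.9) of \cite{Steinacker:2019awe}, so although the contraction of indices is delicate, the computation should close; the conceptual content is carried entirely by the twofold Jacobi identity above.
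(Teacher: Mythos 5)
The first thing to note is that the paper does not prove this lemma at all: both Lemma \ref{lemma-alpha} and Lemma \ref{lemma:box-intertwine-VF} are explicitly \emph{recalled} from (4.9) and (9.9) of \cite{Steinacker:2019awe}, so there is no in-paper argument to compare yours against. Judged on its own, your outline is the right kind of argument and is essentially how the result is obtained in that reference: writing $\Box_H$ as the double bracket $\{x^b,\{x_b,\cdot\}\}$, using the Jacobi identity to reduce $[\Box_H,D_a]$ to the adjoint action of $\theta_{ab}=\{x_a,x_b\}\propto\cM_{ab}$, and recognizing the resulting cross term as the difference of Casimirs on the tensor product of the vector representation with $\cC^s$. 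A useful consistency check you could add: the two shifts are first differences of $r^2s(s+1)$, since $r^2\big(s(s+1)-(s-1)s\big)=2r^2s$ and $r^2\big((s+1)(s+2)-s(s+1)\big)=2r^2(s+1)$, so the two displayed identities are jointly equivalent to the single statement that $\Box_H-\hat S^2$ commutes with $\{\cdot,x_a\}$, where $\hat S^2$ is the spin Casimir acting as $r^2s(s+1)$ on $\cC^s$. This is exactly the ``$\Box_H$ plus spin operator equals full $\mso(4,2)$ Casimir'' structure underlying the cited equations.

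That said, as a proof your text has a genuine gap precisely where you flag it: the claim that the correction operator acts as the \emph{scalar} $-2r^2s$ on $[\cdot]_-$ and $+2r^2(s+1)$ on $[\cdot]_+$ is asserted, not derived, and it carries the entire quantitative content of the lemma. Two things need to be established there. First, that the correction, restricted to the $\cC^{s\mp 1}$ component, is proportional to $\{\phi^{(s)},x_a\}_\mp$ rather than some other element of $\cC^{s\mp1}$; this requires identifying $\cC^{s\pm1}$ with irreducible pieces of the relevant module so that Schur's lemma applies, and it is here that one also needs the (nontrivial, reference-supplied) fact that $\Box_H$ preserves the grading by $\cC^{s}$, which you use implicitly when you say $\Box_H$ commutes with $[\cdot]_\pm$. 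Second, the actual eigenvalues, which require the explicit $\mso(4,2)$ relations $\{\theta_{ab},x^b\}\propto r^2x_a$ and the normalization of the spin Casimir on $\cC^s$. Since you defer both to \cite{Steinacker:2019awe}, your proposal is at the same level of completeness as the paper itself (which simply cites the result), but it is a plan for a proof rather than a proof.
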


Now consider the following (possibly $\hs$-valued)  vector fields on $H^4$
\begin{align}
 V:= V^a \eth_a, \qquad V^a = \{Z,x^a\}
\end{align}
generated by some $Z \in\cC$, where $\eth_a$ is the tangential derivative operator on $H^4_n$ introduced in \cite{Sperling:2019xar}.
It was shown there that
$V$ is always tangential to $H^4$ and  divergence-free,
\begin{align}
 \eth_a V^a = 0 \ .
 \label{VF-H-divfree}
\end{align}
$V$ can be viewed as push-forward of the Hamiltonian vector field
$\{Z,.\}$ on $\C P^{1,2}$
to $H^4$ via the bundle projection\footnote{In general, the push-forward of a vector
field via a non-injective map is not well defined. However, the push-forward in the present situation makes sense if interpreted as $\hs$-valued map.}.
For $Z=Z^{(s)} \in \cC^s$, it decomposes into different $\hs$ components as
\begin{align}
 V^a = V^a_{(s-1)} + V^a_{(s+1)} \qquad \cC^{s-1} \oplus \cC^{s+1} \ .
\end{align}
One might hope that all divergence-free $\hs$-valued vector fields can be written in this form, but this is not possible,
by counting degrees of freedom. However for $s=0$,
all divergence-free vector fields $V^a\in\cC^0$
can indeed be obtained in this way for a suitable $Z \in \cC^1$, up to $\hs$ corrections. 
More precisely, we have the following  result:

\begin{lem}
 \label{lem:H4-reconstruct}
Given any divergence-free tangential vector field $\eth_a V^a = 0$ on $H^4$
with $V^a \in \cC^0$,
there is a unique generator $Z\in\cC^1$ such that 
\begin{align}
 V^a = \{Z,x^a\}_0 \ .
 \label{VF-reconstruction-H4}
\end{align}
This $Z$ is given explicitly by
\begin{align}
 Z := -3(\Box_H -4r^2)^{-1}\{V^a,x_a\} \qquad \in \cC^1 \ .
\end{align}
 
\end{lem}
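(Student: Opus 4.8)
The plan is to verify that the stated $Z$ does the job by a single direct computation that feeds Lemma~\ref{lemma-alpha} and the intertwining identity of Lemma~\ref{lemma:box-intertwine-VF} into one another, and then to settle the invertibility of the operators that appear. First I would check that $Z$ is well defined and lands in $\cC^1$: since $V^a\in\cC^0$ and a Poisson bracket with a Cartesian coordinate shifts the spin by $\pm 1$, the would-be spin-lowering component of $\{V^a,x_a\}$ would live in a nonexistent $\cC^{-1}$, so the contraction sits entirely in $\cC^1$, exactly as $\phi^{(1)}=\{x^a,\phi_a\}$ does in Lemma~\ref{lemma-alpha}. Writing $\phi^{(1)}:=\{x^a,V_a\}=-\{V^a,x_a\}$ and $\chi:=(\Box_H-4r^2)^{-1}\phi^{(1)}\in\cC^1$, the definition reads $Z=3\chi$, so everything reduces to evaluating $\{\chi,x_a\}_0$.

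The core step is that $\Box_H$ does not commute with $\{\cdot,x_a\}_0$, and this is precisely what Lemma~\ref{lemma:box-intertwine-VF} controls. Applying its first (``minus'') identity with $s=1$, for which the projection onto $\cC^{s-1}=\cC^0$ is exactly $\{\cdot,x_a\}_0$, and using $\Box_H\chi=\phi^{(1)}+4r^2\chi$, I obtain
\begin{align}
 (\Box_H+2r^2)\{\chi,x_a\}_0
 = \{\Box_H\chi,x_a\}_0
 = \{\phi^{(1)},x_a\}_0 + 4r^2\{\chi,x_a\}_0 ,
\end{align}
which rearranges to $\{\phi^{(1)},x_a\}_0=(\Box_H-2r^2)\{\chi,x_a\}_0$. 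On the other hand, since $V_a$ is divergence-free, Lemma~\ref{lemma-alpha} gives $\{\phi^{(1)},x_a\}_0=\tfrac{1}{3}(\Box_H-2r^2)V_a$. Equating the two expressions,
\begin{align}
 (\Box_H-2r^2)\{\chi,x_a\}_0 = \tfrac{1}{3}(\Box_H-2r^2)V_a ,
\end{align}
and cancelling the operator $\Box_H-2r^2$ yields $\{\chi,x_a\}_0=\tfrac{1}{3}V_a$, hence $\{Z,x^a\}_0=V^a$ as required.

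I expect the invertibility questions to be the main obstacle. Both the very definition of $(\Box_H-4r^2)^{-1}$ on $\cC^1$ and the cancellation of $\Box_H-2r^2$ on divergence-free fields in $\cC^0$ require that $4r^2$ and $2r^2$ lie outside the respective spectra of $\Box_H$. This is a spectral statement on the fuzzy hyperboloid $H^4_n$, which I would settle using the decomposition of $\cC^s$ into $\mso(4,2)$-modes together with the known eigenvalues of $\Box_H$ on them from \cite{Steinacker:2019awe,Sperling:2019xar}, at least in the asymptotic regime to which the reconstruction is restricted.

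For uniqueness, the same computation exhibits $Z\mapsto\{Z,x_a\}_0$ as a map of $\cC^1$ onto the divergence-free fields in $\cC^0$ with the displayed formula as a one-sided inverse; injectivity, and hence uniqueness of $Z\in\cC^1$, then follows from the same cancellation of $\Box_H-2r^2$, or equivalently by matching the two spaces mode by mode under $\mso(4,2)$, as anticipated by the degree-of-freedom count preceding the statement.
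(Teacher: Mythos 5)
Your proposal is correct and follows essentially the same route as the paper: it combines the $s=1$ ``minus'' intertwining identity of Lemma~\ref{lemma:box-intertwine-VF} with Lemma~\ref{lemma-alpha}, differing only in that you cancel $\Box_H-2r^2$ at the end rather than commuting the inverse operators through the bracket first, and your treatment of invertibility (positivity of $\Box_H-4r^2$ on $\cC^1$ and of $\Box_H-2r^2$ on $\cC^0$ via the spectral results of \cite{Steinacker:2019awe}) and of uniqueness is at the same level of detail as the paper's. No substantive gap.
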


\begin{proof}
 

As pointed out above, the vector field $\{Z,x^a\}$ is always divergence-free,
\begin{align}
\eth_a \{Z,x^a\} = 0 \ .
\end{align}
The intertwiner result \eq{Box-H-intertwiner-minus} implies 
\begin{align}
  (\Box_H -2r^2) \{Z^{(1)},x_a\}_- &= \{((\Box_H-4r^2)Z^{(1)},x_a\}_- \nn\\
   \{(\Box_H-4r^2)^{-1}Z^{(1)},x_a\}_- &= (\Box_H-2r^2)^{-1}\{Z^{(1)},x_a\}_- 
 \label{box-inv-int} 
\end{align}
for any $Z^{(1)} \in \cC^1$.
Therefore
\begin{align}
 \{Z,x^a\}_0 &= -3\{(\Box_H-4r^2)^{-1}\{V^b,x_b\},x^a\}_0 
  = -3(\Box_H-2r^2)^{-1} \{ \{V^b,x_b\},x^a\}_0 = V^a
\end{align}
using Lemma \ref{lemma-alpha} in the last step.
Uniqueness can be seen similarly using the results in \cite{Steinacker:2019awe}.
The inverse operators make sense at least for square-integrable functions, because 
 $\Box_H-\frac 92r^2$ is  positive-definite on $H^4$ for unitary irreps (cf.~\cite{Steinacker:2019awe}),
 and $\Box_H-4r^2$ is positive-definite on unitary modes in $\cC^1$. 

\end{proof}

Now consider the reconstruction problem on $H^4_n$.
Given $V^a$, we define 
\begin{align}
 Z^{(1)} := -3(\Box_H -4r^2)^{-1}\{V^a,x_a\} \qquad \in \cC^1 \ 
\end{align}
which satisfies 
\begin{align}
 V^a = \{Z^{(1)},x^a\}_0 \ 
\end{align}
as shown above.
However, $\{Z^{(1)},x^a\}$ contains  in general also a spin 2 component
\begin{align}
 V^{(2)a} := \{Z^{(1)},x^a\}_2 \qquad \in \cC^2 \ .
 \label{vectorfield-C2-component}
\end{align}
Noting that $\eth$ respects $\cC^n$, this satisfies 
\begin{align}
  \eth_a V^{(2)a} &= 0,  \nn\\
  \{V^{(2)a},x_a\}_3 &= \{\{Z^{(1)},x^a\}_2,x_a\}_3 = 0 \ 
  \label{V-2-ids}
\end{align}
since $\cC^3$ does not contain any spin 1 mode.
The second relation means that we cannot just repeat the above procedure to cancel this. 
One can show that the only generators $Z\in\cC^1$ which do 
{\em not} induce spin 2 components via \eq{vectorfield-C2-component}
are linear combinations of $\theta^{ab}$.

This means that the reconstruction of vector fields on  
$H^4$ generically leads to extra higher-spin components  $V^{(2)a} \in \cC^2$ \eq{vectorfield-C2-component}, which however encode the same information as $V^a$.
It remains an open question if these can be cancelled by suitable $\hs$-modifications of $Z$ and possibly $x^a$.

\subsection{Divergence-free vector fields and reconstruction on $\cM^{3,1}$}
\label{sec:hs-valued-VF-M31}

Now we recall that $\cM^{3,1}\subset \R^{3,1}$ is obtained from $H^4\subset \R^{4,1}$ via a projection along $x^4$.
Therefore any vector field $V^a$  on $H^4$ can be projected to a vector field $V^\mu$ on $\cM^{3,1}$ by
simply dropping the $V^4$ component. In particular, the Hamiltonian vector field $V^a = \{Z,x^a\}$
is mapped to $V^\mu = \{Z,x^\mu\}$ in Cartesian coordinates.
Conversely, any vector field $V^\mu$ on $\cM^{3,1}$ can be lifted to $H^4$ by defining 
\begin{align}
V^4 := - \frac{1}{x_4} x_\mu V^\mu \ ,
\label{V4-def}
\end{align}
which clearly satisfies the tangential relation $V^a x_a = 0$ on $H^4$.

It turns out that 
this correspondence maps divergence-free ($\cC^0$-valued) vector fields $\eth_a V^a = 0$ on $H^4$
to divergence-free vector fields  on $\cM^{3,1}$ and vice versa, in the sense that
\begin{align}
 \del_\mu (\r_M V^\mu) = 0, \qquad \r_M = \sinh(\eta)^{-1} \ .
\end{align}
This is established in the following result:

\begin{lem}
 \label{lemma-div-M31}

Let $V^a$ be a ($\cC^0$-valued, tangential) divergence-free vector field on $H^4$.
Then its reduction (or push-forward) to $\cM^{3,1}$ satisfies
\begin{align}
 \del_\mu(\r_M V^\mu) =0 \ .
 \label{div-free-M31}
\end{align}
Conversely, if $V^\mu$ satisfies \eq{div-free-M31}
then its lift to $H^4$ defined by \eq{V4-def}
is divergence-free in the sense of \eq{VF-H-divfree}.

\end{lem}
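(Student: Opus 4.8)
The plan is to reduce the whole statement to a single Jacobian computation: the induced Riemannian volume form on $H^4$, expressed in the coordinates $x^\mu$ that parametrise the projection, coincides with the measure $\r_M\,d^4x$ that enters \eqref{div-free-M31}. Both notions of divergence-freeness are assertions about preservation of a top form. The condition $\eth_a V^a=0$ of \eqref{VF-H-divfree} expresses that the tangential field $V$ preserves the induced volume form $\omega_H$ on $H^4$ (this is its geometric content, inherited from the symplectic-volume preservation of the Hamiltonian fields $\{Z,\cdot\}$), while \eqref{div-free-M31} is literally $\mathcal{L}_V(\r_M\,d^4x)=0$. Since $\mathcal{L}_V(f\,d^4x)=\partial_\mu(fV^\mu)\,d^4x$ for any top form $f\,d^4x$, once I show $\omega_H=\r_M\,d^4x$ in these coordinates both implications follow at once.

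First I would parametrise the relevant sheet of $H^4$ by $x^\mu$, with $x_4=\sqrt{-R^2-x_\mu x^\mu}=R\sinh(\eta)$ fixed by the constraint \eqref{radial-constraint} and \eqref{x4-eta-def}. Differentiating the constraint gives $dx_4=-x_\mu\,dx^\mu/x_4$, so the metric induced from $\R^{4,1}$ is $g^{H}_{\mu\nu}=\eta_{\mu\nu}+x_\mu x_\nu/x_4^2$. Using $\det(\eta_{\mu\nu}+u_\mu u_\nu)=\det\eta\,(1+\eta^{\mu\nu}u_\mu u_\nu)$ with $u_\mu=x_\mu/x_4$ and $x_\mu x^\mu=-R^2-x_4^2$, one finds $\det g^{H}=R^2/x_4^2$, hence $\sqrt{|g^{H}|}=R/x_4=1/\sinh(\eta)=\r_M$. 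This is exactly the identity $\omega_H=\r_M\,d^4x$ I need.

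Next I would match the components. The map $x^\mu\mapsto(x^\mu,x_4(x))$ realises the coordinate vector $\partial_{x^\mu}$ on $H^4$ as the ambient vector with components $\delta^\nu_\mu$ for $\nu=0,\dots,3$ and $-x_\mu/x_4$ in the $4$-direction, since $\partial_{x^\mu}x_4=-x_\mu/x_4$. Hence for an intrinsic field $V=V^\mu\partial_{x^\mu}$ the ambient components with $\mu=0,\dots,3$ are precisely the intrinsic ones, while the remaining component is $V^4=-x_\mu V^\mu/x_4$ — which is exactly the tangentiality relation $V^a x_a=0$ and the lift prescription \eqref{V4-def}. Thus dropping $V^4$ (reduction to $\cM^{3,1}$) returns the intrinsic components, and conversely \eqref{V4-def} reconstructs the unique tangential lift of a given $V^\mu$. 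With $\omega_H=\r_M\,d^4x$ in hand, $\eth_a V^a=0$ becomes the same equation as $\mathcal{L}_V(\r_M\,d^4x)=\partial_\mu(\r_M V^\mu)\,d^4x=0$, yielding the forward statement and its converse in one stroke.

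The step requiring the most care is the identification $\eth_a V^a=\mathrm{div}_{g^{H}}V$ for tangential fields, i.e. that the operator $\eth_a$ of \cite{Sperling:2019xar} computes the covariant divergence rather than the bare projected divergence $P_a{}^b\partial_b V^a$, which would differ by second-fundamental-form terms for a generic off-surface extension. I would settle this by invoking the defining properties of $\eth_a$ from \cite{Sperling:2019xar,Steinacker:2019awe}, where $V=V^a\eth_a$ arises as the push-forward of a Hamiltonian vector field and is therefore automatically volume-preserving; equivalently, one checks directly from $V^a x_a=0$ that the normal-derivative contributions cancel, so that $\eth_a V^a$ reduces to $\r_M^{-1}\partial_\mu(\r_M V^\mu)$. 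The remaining bookkeeping — distinguishing the ambient $\partial_a$ from the intrinsic $\partial_\mu$ acting through $x_4(x)$ — is routine once the volume identity is established.
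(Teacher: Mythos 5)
Your proposal is correct, and it takes a genuinely more geometric route than the paper. The paper proves the lemma by brute force: it writes out the explicit coordinate expressions $\eth^\mu = \del^\mu + \frac 1{R^2}x^\mu x^\s\del_\s$ and $\eth^4 Z = \frac{\sinh(\eta)}{R}x^\mu\del_\mu Z$ (from \cite{Steinacker:2019awe} and the relation $\theta^{4\mu}=r^2Rt^\mu$), substitutes the lift $V^4=-x_\mu V^\mu/x_4$, and watches the terms collapse to $\eth_a V^a = \sinh(\eta)\,\del_\mu(\sinh(\eta)^{-1}V^\mu)$, from which both directions are immediate. You instead observe that both divergence conditions are statements about preservation of a top form, compute the induced volume on $H^4$ in the coordinates $x^\mu$ via the Jacobian $\sqrt{|g^H|}=R/|x_4|=\r_M$ (your determinant identity and the constraint $x_\mu x^\mu=-R^2-x_4^2$ check out), and identify the two measures. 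Your approach buys a conceptual explanation of \emph{why} the factor $\r_M$ appears (it is the Riemannian volume of $H^4$ seen from $\cM^{3,1}$), whereas the paper's computation is self-contained and never needs the induced metric at all. The one place where your argument leans on something you do not actually carry out is the identification $\eth_a V^a=\mathrm{div}_{g^H}V$ for tangential $V$. Of your two proposed justifications, the first (Hamiltonian vector fields are volume-preserving) does not suffice: the lemma concerns \emph{arbitrary} tangential divergence-free $V^a\in\cC^0$, not only those of the form $\{Z,x^a\}$, so volume-preservation of Hamiltonian fields establishes nothing about the operator identity. The second justification (checking that the second-fundamental-form/normal-derivative terms cancel against $V^ax_a=0$) is the right one and does work --- $\eth_a$ is precisely the tangential projector $P_a{}^b\del_b$ with $P^{ab}=\eta^{ab}+x^ax^b/R^2$, as the paper's formulas \eq{eth-explicit} and \eq{eth4-explicit} confirm --- but carrying it out in coordinates essentially reproduces the paper's computation, so the Jacobian detour, while illuminating, does not ultimately save work.
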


\begin{proof}

First we recall the following property of the tangential derivative $\eth$ from  (3.65) in \cite{Steinacker:2019awe}
\begin{align}
 \eth^\mu x^\nu &= \eta^{\mu\nu} + \frac 1{R^2} x^\mu x^\nu\,
  = \big(\del^\mu + \frac 1{R^2} x^\mu x^\s\del_\s\big) x^\nu \ .
 \label{}
\end{align}
Therefore we can identify 
\begin{align}
 \eth^\mu = \del^\mu + \frac 1{R^2} x^\mu x^\s\del_\s  \qquad \mbox{on} \ \ \cC^0 \ .
 \label{eth-explicit}
\end{align}
Furthermore, 
\begin{align}
 \eth^4 Z &= \frac{1}{r^2 R^2} x_\mu \{\theta^{4\mu} ,Z\}
  = \frac{1}{R} x^\mu\{t_\mu,Z\}
  =\frac{\sinh(\eta)}{R} x^\mu \del_\mu Z   
  \label{eth4-explicit}
\end{align}
for $Z\in\cC^0$,
since $\theta^{4\mu}= r^2\cM^{\mu 4} = r^2 R t^\mu$.
Therefore
\begin{align}
 \eth_a V^a &=  \eth_\mu V^\mu  + \eth_4 V^4   \nn\\
  &= \big(\del_\mu + \frac 1{R^2} x_\mu x^\s\del_\s\big)  V^\mu 
  - \frac{\sinh(\eta)}{R^2} x^\nu\del_\nu(\sinh(\eta)^{-1} x_\mu V^\mu)  \nn\\
  &= \sinh(\eta) \del_\mu(\sinh(\eta)^{-1} V^\mu) \ .
\end{align}

\end{proof}

\paragraph{Reconstruction of vector fields and frame.}

We can now solve the following ``reconstruction'' problem on $\cM^{3,1}$:
Given any $\cC^0$-valued divergence-free vector field $V^\mu$,
\begin{align}
 \del_\mu(\r_M V^\mu) = 0 \ ,
\end{align}
there is a generating function $Z \in\cC^1$ such that 
\begin{align}
 V^\mu = \{Z,x^\mu\}_0 \ .
 \label{Z-generate-V}
\end{align}
This can be obtained by lifting $V^\mu$ to a divergence-free vector field 
$V^a$ on $H^4$
as in Lemma \ref{lemma-div-M31}. Then the  result \eq{VF-reconstruction-H4} on $H^4$
states that $V^a = \{Z,x^a\}_0$ for some $Z\in\cC^1$, 
which implies $V^\mu = \{Z,x^\mu\}_0$.
Explicitly, this $Z$ is given by
\begin{align}
 Z &= -3(\Box_H -4r^2)^{-1}\big(\{V^\mu,x_\mu\} + \{V^4,x_4\}   \big)  \nn\\
  &=  -3(\Box_H -4r^2)^{-1}\big(\{V^\mu,x_\mu\} 
  + \frac{1}{x_4}\{x_4, x_\mu V^\mu\}   \big) \ .
\end{align}
In particular, for given any classical frame $\tensor{E}{_\a^\mu}$ there is a 
unique $Z_\a \in \cC^1$ such that $\tensor{E}{_\a^\mu} = \{Z_\a,x^\mu\}_0$.
E.g.~for the cosmic background frame, this gives $E_{\a 4} = 0$, and 
we recover
\begin{align}
Z_\a 
 = -12(\Box_H-4r^2)^{-1}\{\sinh(\eta),x_\a\} = t_\a \ .
\end{align}
The generator $Z_\a\in\cC^1$ is uniquely determined by \eq{Z-generate-V}.
This means that the corresponding spin 2 part $\{Z,x^\mu\}_+ \in \cC^2$ 
is also uniquely determined by the vector field.
Therefore in general, the reconstructed frame will contain  higher spin components.
These higher-spin components  cancel  upon averaging over $S^2_n$
 in the linearized theory, but not in the non-linear regime. 
 Since the $\hs$ components of the generators $Z_\a$ in $\cC^{s}$ for $s\geq 2$
are undetermined, it is plausible that these can be adjusted such that
the unwanted higher-spin components of the frame cancel (possibly upon redefining  $x^\mu$),
as in the rotationally invariant solution in \cite{Fredenhagen:2021bnw}.

In any case, these $\hs$ components encode the same vector field as the underlying 
spin $0$ component of the frame, cf.~(D.18) in \cite{Sperling:2018xrm}, since both are encoded in 
$Z_\a \in \cC^1$. 
This means that in a contraction of the frame (such as the metric) 
or of the torsion (such as in the Einstein-Hilbert action), 
the averaged contribution of these $\hs$ components over the internal $S^2_n$
should be similar to the spin 0 component; however this needs to be established in detail elsewhere.
Once this is understood, one may also try to relate our 
solutions with analogous solutions \cite{Iazeolla:2017vng,Iazeolla:2011cb}
obtained in Vasiliev higher spin theory, notably after taking into account the induced 
Einstein-Hilbert term \cite{Steinacker:2021yxt}.
All this remains to be studied in more detail elsewhere.

\section{Deriving the equation of motion}\label{sec:derive-EoM}
Let us briefly review how to derive the equation of motion~\eqref{torsion-frame-eom-3}.

We start from the bosonic part of the matrix-model action \eqref{MM-action}:
\begin{align}
 S_\text{bos}[Z]=
 \Tr\left(
 [Z^{\dot\alpha}, Z^{\dot\beta}][Z_{\dot\alpha}, Z_{\dot\beta}]
 +2m^2Z_{\dot\alpha}Z^{\dot\alpha}
 \right)
 .
\end{align}
The equation of motion of the matrix model is
\begin{align}
 [Z^{\dot\beta}, [Z_{\dot\beta}, Z_{\dot\alpha}]]
 -m^2Z_{\dot\alpha}
 =0,
 \label{eom_matrix}
\end{align}
which reduces to
\begin{align}
 \{ Z^{\dot\beta}, \{ Z_{\dot\beta}, Z_{\dot\alpha} \}\}
 +m^2 Z_{\dot\alpha}
 =0,
 \label{eom_Poisson}
\end{align}
in the semi-classical limit, 
where the endomorphism algebra $\End(\cH)$ becomes a commutative algebra of functions.

This equation of motion can be rewritten 
via the frame by a simple computation.
Let us first denote the Hamiltonian vector field for a field $M$ on the manifold 
where the functions reside in the semi-classical limit,
by
\begin{align}
 \mathcal{P}(M)f :=i\{ M, f \}.
\end{align}
This satisfies
\begin{align}
 [\mathcal{P}(M_1), \mathcal{P}(M_2)] = \mathcal{P}(i\{ M_1, M_2 \}),
\end{align}
and therefore for $Z_{\dot\alpha}$, 
\begin{align}
 [\mathcal{P}(Z^{\dot\beta}), [\mathcal{P}(Z_{\dot\beta}), \mathcal{P}(Z_{\dot\alpha})]]
 =\mathcal{P}(-\{ Z^{\dot\beta}, \{ Z_{\dot\beta}, Z_{\dot\alpha}\}\}),
\end{align}
where $[\ast , \ast\, ]$ is the usual commutator.
Thus the equation of motion \eqref{eom_Poisson}
can be computed by
\begin{align}
 [\mathcal{P}(Z^{\dot\beta}), [\mathcal{P}(Z_{\dot\beta}), \mathcal{P}(Z_{\dot\alpha})]]
 =m^2\mathcal{P}(Z_{\dot\alpha})
 .
\end{align}
The action of $Z_{\dot\alpha}$ on a function
can be written by
a Weitzenb\"ock connection $\nabla_{\dot\alpha}$
if $\tensor{E}{_{\dot\alpha}^{\mu}}$ is 
a set of globally defined linear independent frame fields, 
i.e.~$\mathcal{M}^{3,1}$ is parallelizable.
The relation between them is $\mathcal{P}(Z_{\dot\alpha})=i\nabla_{\dot\alpha}$
since
\begin{align}
 \mathcal{P}(Z_{\dot\alpha}) f
 =i\{ Z_{\dot\alpha}, f \}
 =i\tensor{E}{_{\dot\alpha}^{M}}\partial_M f
 =i\nabla_{\dot\alpha} f.
\end{align}
It acts 
as $\nabla_{\dot\alpha}=\tensor{E}{_{\dot\alpha}^\mu}\partial_\mu$
without a spin-connection term
on fields without any general coordinate indices
while it acts on contravariant vector fields as
$\nabla_\mu V^\nu=\partial_\mu V^\nu +\tensor{\Gamma}{_{\mu\rho}^\nu}V^\rho$.

Then the equation of motion is written as a relation for operators\footnote{
The mathematical structure of the equations for the frame
is essentially the same as Hanada-Kawai-Kimura
\cite{Hanada:2005vr}.
However in that approach, the matrices are interpreted as
differential operators on a commutative bundle over space-time
(see e.g.~Ref.~\cite{Hanada:2006gg,Furuta_2007,Isono:2009pu,Asano:2012mn,Sakai:2019cmj} for details),
while here they are quantized functions on a bundle over space-time.
Accordingly, the space of modes in $\End(\cH)$ is vastly bigger in 
Hanada-Kawai-Kimura, and the absence of ghosts has not been established.
Moreover, the covariant derivative here is the one with the Weitzenb\"ock connection
while, in Hanada-Kawai-Kimura, it is the standard Levi-Civita connection
multiplied by a Clebsch-Gordan coefficient for the decomposition of 
the tensor product of a vector representation and a regular representation 
into regular representations.
}:
\begin{align}
 [\nabla^{\dot\beta},[\nabla_{\dot\beta}, \nabla_{\dot\alpha}]]
 =-m^2\nabla_{\dot\alpha}
 \ .
\end{align}
Since the commutator of the covariant derivatives satisfies
\begin{align}
 &[\nabla_{\dot\alpha}, \nabla_{\dot\beta}]
 =-\tensor{T}{_{\dot\alpha\dot\beta}^{\dot\gamma}}\nabla_{\dot\gamma},
\end{align}
where
$\tensor{T}{_{\dot\alpha\dot\beta}^{\dot\gamma}}=\tensor{E}{_{\dot\alpha}^\mu}\tensor{E}{_{\dot\beta}^\nu}\tensor{T}{_{\mu\nu}^{\dot\gamma}}$ is the torsion for the Weitzenb\"ock connection,
the equation of motion becomes\footnote{
This form of the equation of motion in this paper is
the same as Ref.~\cite{Furuta_2007}
because the Riemann curvature with the contribution from the torsion 
is zero in the Weitzenb\"ock connection.
}
\begin{align*}
 \nabla_{\dot\beta}\tensor{T}{^{\dot\beta}_{\dot\alpha}^{\dot\gamma}}
 +\tensor{T}{_{\dot\alpha\dot\beta}^{\dot\delta}}
 \tensor{T}{^{\dot\beta}_{\dot\delta}^{\dot\gamma}}
 -m^2\delta_{\dot\alpha}^{\dot\gamma}
 =0,
\end{align*}
or partially in terms of the general coordinate indices,
\begin{align}
 \nabla_{\mu}(\gamma^{\mu\nu}\tensor{T}{_{\nu\rho}^{\dot\alpha}})
 +\tensor{T}{_{\rho\mu}_{\nu}}
 \tensor{T}{^{\mu\nu}^{\dot\alpha}}
 -m^2\tensor{E}{^{\dot\alpha}_{\rho}}
 =0.
\end{align}
This is the equation derived in \cite{Steinacker:2020xph}, where
$T^{\mu\nu\dot\alpha}=\gamma^{\mu\mu'}\gamma^{\nu\nu'}\tensor{T}{_{\mu'\nu'}^{\dot\alpha}}$.

Let us then rewrite the above equation of motion
in terms of the Levi-Civita connection.
The relation of the Weitzenb\"ock connection with 
the Levi-Civita connection is
\begin{align}
 \tensor{\Gamma}{_{\mu\nu}^{\rho}}
 &=-\tensor{E}{^{\dot\alpha}_\nu}\partial_{\mu}\tensor{E}{_{\dot\alpha}^{\rho}}
 =\tensor{\Gamma}{^{(\gamma)}_{\mu\nu}^{\rho}}
 +\tensor{K}{_{\mu\nu}^{\rho}}
 \nonumber \\
 &=\tensor{\Gamma}{^{(G)}_{\mu\nu}^{\rho}}
 +\tensor{K}{_{\mu\nu}^{\rho}}
 -\rho^{-1}\left(
 \delta_{\nu}^{\rho}\partial_\mu\rho
 +\delta_{\mu}^{\rho}\partial_\nu\rho
 -\gamma_{\mu\nu}\gamma^{\sigma\rho}\partial_\sigma\rho
 \right)
 ,
 \label{Weitzenboeck2Gandrho}
\end{align}
where $\tensor{\Gamma}{^{(\gamma)}_{\mu\nu}^{\rho}}$ and 
$\tensor{\Gamma}{^{(G)}_{\mu\nu}^{\rho}}$ are
the Levi-Civita connections associated with $\gamma_{\mu\nu}$ and $G_{\mu\nu}$,
respectively, and
$\tensor{K}{_{\mu\nu}^{\rho}}$
is the contorsion of the Weitzenb\"ock connection,
which is also interpreted as the spin connection
constructed from $\tensor{\Gamma}{^{(\gamma)}_{\mu\nu}^{\rho}}$
via
$\tensor{K}{_{\mu}^{\dot\alpha\dot\beta}}=-\tensor{E}{^{\dot\beta}_\rho}\nabla^{(\gamma)}_{\mu}\tensor{E}{^{\dot\alpha\rho}}$.
We denote 
the covariant derivative associated with $\gamma_{\mu\nu}$ and $G_{\mu\nu}$
by $\nabla^{(\gamma)}_\mu$ and $\nabla^{(G)}_\mu$, respectively.
Substituting the Weitzenb\"ock connection with \eqref{Weitzenboeck2Gandrho}
and using the relation 
$\tensor{T}{_{\mu\nu}^{\rho}}=\tensor{K}{_{\mu\nu}^{\rho}}-\tensor{K}{_{\nu\mu}^{\rho}}$,
one obtains
\begin{align}
 &\nabla^{(G)}_{\mu}(\gamma^{\mu\nu}\tensor{T}{_{\nu}_{\rho}^{\dot\alpha}})
 +\frac{1}{2}(\tensor{T}{_{\mu\nu\rho}}
 +\tensor{T}{_{\nu\rho\mu}}
 +\tensor{T}{_{\rho\mu\nu}})
 \tensor{T}{^{\mu\nu\dot\alpha}}
 \nonumber \\
 &\hspace{80pt}
 +\gamma^{\mu\nu}\tensor{T}{_{\sigma\mu}^\sigma}\tensor{T}{_{\nu}_{\rho}^{\dot\alpha}}
 -(\delta_{\sigma}^{\sigma}-2)
 \gamma^{\mu\nu}
 \rho^{-1} \partial_\mu\rho\, 
 \tensor{T}{_{\nu}_{\rho}^{\dot\alpha}}
 -m^2\tensor{E}{^{\dot\alpha}_{\rho}}
 =0.
 \label{EOM_before_div-free}
\end{align}

The Jacobi identity for $\theta^{\mu\nu}$,
which reduces to the identity $\partial_\nu(\rho_M\theta^{\mu\nu})=0$,
results in the divergence constraint \eqref{frame-div-free}.
Namely, 
one can show that the frame satisfies
\cite{Steinacker:2020xph}
\begin{align*}
 \partial_\mu(\sqrt{|G|}\rho^{-2}\tensor{E}{_{\dot\alpha}^\mu})
 =\partial_\mu(\rho_M\tensor{E}{_{\dot\alpha}^\mu})
 =-\partial_\mu(\rho_M\theta^{\mu\nu}\partial_\nu Z_{\dot\alpha})
 =-\partial_\mu(\rho_M\theta^{\mu\nu})\partial_\nu Z_{\dot\alpha}
 =0
 ,
\end{align*}
and hence $\tensor{E}{^{\dot\alpha}_\mu}\partial_\nu \tensor{E}{_{\dot\alpha}^\nu}=-\partial_\mu\ln\rho_M$.
Therefore, \eqref{T-trace-rho} holds:
\begin{align*}
 \tensor{K}{_{\sigma\mu}^\sigma}
 =\tensor{T}{_{\sigma\mu}^\sigma}
 =-\tensor{E}{^{\dot\alpha}_\mu}\partial_\nu \tensor{E}{_{\dot\alpha}^\nu}
 +\tensor{E}{^{\dot\alpha}_\nu}\partial_\mu \tensor{E}{_{\dot\alpha}^\nu}
 =\partial_\mu \ln\left[
 \frac{\rho_M}{\sqrt{|\gamma|}}
 \right]
 =\frac{2}{\rho}\partial_\mu \rho
 .
\end{align*}
By plugging this equation 
into the equation of motion \eqref{EOM_before_div-free},
one reaches
\begin{align}
 \nabla^{(G)}_{\mu}(\gamma^{\mu\nu}\tensor{T}{_{\nu}_{\rho}^{\dot\alpha}})
 +\frac{1}{2}(\tensor{T}{_{\mu\nu\rho}}
 +\tensor{T}{_{\nu\rho\mu}}
 +\tensor{T}{_{\rho\mu\nu}})
 \tensor{T}{^{\mu\nu\dot\alpha}}
 -m^2\tensor{E}{^{\dot\alpha}_{\rho}}
 =0.
\end{align}
Finally, the following relation 
\begin{align}
 \rho^2 \sqrt{|G|}^{-1}
 \varepsilon^{\mu\nu\rho'\sigma}G_{\r\r'} \tilde T_{\sigma} \,
 =-\gamma^{\mu\mu'}\gamma^{\nu\nu'}
 (\tensor{T}{_{\mu'\nu'\rho}}
 +\tensor{T}{_{\nu'\rho\mu'}}
 +\tensor{T}{_{\rho\mu'\nu'}})
\end{align}
leads us to the form of the equation of motion \eqref{torsion-frame-eom-3}.

\bibliographystyle{JHEP}
\bibliography{papers}


\end{document}